\documentclass[11pt,reqno]{amsart}
\usepackage{amsmath,amssymb,graphicx,amscd}
\usepackage{mathrsfs}
\usepackage{enumerate}
\usepackage{mathrsfs}
\usepackage{dsfont}
\usepackage{amsfonts}
\usepackage{fullpage}
\usepackage{color}
\vfuzz2pt 
\hfuzz2pt 

\renewcommand{\P}{\mathbf{P}}
\newcommand{\Q}{\mathbf{Q}}
\newcommand{\E}{\mathbf{E}}
\newcommand{\I}{\mathbf{1}}
\newcommand{\FF}{\mathbb{F}}
\newcommand{\GG}{\mathbb{G}}
\newcommand{\F}{\mathcal{F}}
\newcommand{\G}{\mathcal{G}}

\newcommand{\HH}{{\mathbb H}}
\newcommand{\h}{\mathcal{H}}
\newcommand{\RR}{{\mathbb R}}

\newcommand{\finproof}{$\square$}


%


%



\newtheorem{thm}{Theorem}[section]
\newtheorem{cor}[thm]{Corollary}
\newtheorem{lem}[thm]{Lemma}
\newtheorem{prop}[thm]{Proposition}
\theoremstyle{definition}
\newtheorem{defn}[thm]{Definition}
\theoremstyle{remark}
\newtheorem*{rem}{Remark}

\theoremstyle{remark}
\newtheorem{ex}[thm]{Example}
\numberwithin{equation}{section}

\numberwithin{equation}{section}

\begin{document}

\title{From the decompositions of a stopping time to risk premium decompositions}

\author{Delia Coculescu}
\address{Swiss Banking Institute \\ University of Z\"urich, Plattenstrasse 32\\ Z\"{u}rich 8032, Switzerland.}
\email{coculescu@isb.uzh.ch}

\subjclass[2000]{60G07; 91G40} \keywords{Random times, classification of stopping times, enlargements of filtrations, immersed filtrations, default modeling, default event risk premium, pricing of defaultable claims.}
\begin{abstract}
We build a general model for pricing defaultable claims. In addition to the usual absence of arbitrage assumption, we assume that one defaultable asset (at least) looses value when the default occurs. We prove that under this assumption, in some standard market filtrations, default times are totally inaccessible stopping times; we therefore proceed to a systematic construction of  default times with particular emphasis on totally inaccessible stopping times.

Surprisingly, this abstract mathematical construction, reveals a very specific and useful way in which default models can be built, using both market factors and idiosyncratic factors. We then provide all the relevant characteristics of a default time  (i.e. the Az\'ema supermartingale and its Doob-Meyer decomposition) given the information about these factors.  We also provide explicit formulas for the prices of defaultable claims and analyze the risk premiums that form in the market in anticipation of losses which occur at the default event. The usual reduced-form framework is extended in order to include possible economic shocks, in particular jumps of the recovery process at the default time.  This formulas are not classic and we point out that the knowledge of the default compensator or the intensity process is not anymore a sufficient quantity for finding explicit prices, but we need indeed the Az\'ema supermartingale and its Doob-Meyer decomposition.

\end{abstract}

\maketitle

\section{Introduction}

Negative financial events such as defaults can sometimes be predicted by investors or, on the opposite,  they can occur in an abrupt way and produce losses.  In this paper, the properties of a stopping time which models a default event are  analyzed in relation with the losses that it produces to debt-holders when it occurs, using standard properties of the jump times of martingales. 

In the ''general theory of processes", one classifies stopping times as predictable, accessible and totally inaccessible stopping times (see Definition \ref{classst} below). Traditionally in the default risk literature structural models have produced  default times that  are predictable stopping times (for instance the first hitting time of a fixed level by a diffusion), whereas in the reduced form approach defaults are modeled as totally inaccessible stopping times (first jump of a Cox process). The difference between the two classes of stopping times can be eliminated by a change of the underlying filtration: a predictable stopping time can become totally inaccessible in a smaller filtration, as illustrated by several so-called incomplete information models including \cite{DuffLand}, \cite{Kusuok}. See also the paper \cite{JarrPrott} where this information-based connection between the structural  and reduced-form approaches is explained. 
Using no arbitrage arguments, we show (Section \ref{sec::losses}) that defaults that are thought to produce losses for a financial asset do not have a predictable part  and it is hence natural to model them  as totally inaccessible stopping times in the market filtration.  This gives the economical motivation of our study of properties of totally inaccessible default times.

It has become standard to construct reduced-form default models in two steps (as originally proposed in \cite{elliotjeanbyor}, \cite{jenbrutk1}): one begins with a filtration where the default time is not observable, and then obtain the market filtration after progressively enlarging the original filtration so that the default time becomes a stopping time. In Section \ref{sec::finmodel} we present this construction and some fundamental results that we shall subsequently use in the paper. Our leading assumption in this paper will be that in the enlarged filtration the martingales from the original filtration remain martingales, a property known in the literature as the \textbf{(H)} hypothesis. 

In Section \ref{sec::generalprop} we study some general properties of a  stopping time (i.e., the default time)  under this two-step construction. In particular, we emphasize that there exist decomposing sequences of stopping times in the initial filtration (Proposition \ref{gendecomp}) and we use their properties (i.e., their compensators) in order to characterize the Az\'ema supermartingale of the default time and give its Doob-Meyer decomposition (Proposition \ref{Azema}). This represents a generalization of the classical models, where the Az\'ema supermartingale is supposed to be continuous. The particular case when $\tau$ is  totally inaccessible is also analyzed and we give a useful economic  interpretation to this decomposition.

We also propose a general construction of a random time such that it has a given Az\'ema supermartingale (Section \ref{sec::randomizedst}): given an increasing process $A$ with $A_0=0$ and  $A_\infty=1$,  we construct a sequence of stopping times $(T^i)_{i\in \mathbb{N}}$  and an $\mathbb{N}$-valued random variable $S$ such that the stopping time $T^S$ has its Az\'ema supermartingale equal to the given process $A$. 

Finally, in Section \ref{sec::premium} we apply our results (in particular the decompositions studied in Section \ref{sec::generalprop}) in order to obtain pricing formulas for defaultable claims. The usual reduced-form framework is extended in order to include possible economic shocks, and in particular jumps in the recoveries at the default time.  Indeed, there has been increasing support in the empirical literature that both the probability of default  and the loss given default are correlated and driven by macroeconomic variables. A perfect illustration of this phenomenon is the rapid decline in property prices the recent credit crisis, where defaults coincided with a wave of asset liquidation.  Our aim is to extend the usual reduced-form setup in order to include possible jumps of the recovery process at the default time and to give an expression for the risk premiums attached to such jumps.

Building on the method recently developed in \cite{coculescunikeghbali}, we develop pricing formulas which are not classic (precisely because of the possible jumps of some default-free assets at the default time): the knowledge of the default compensator or the intensity process is not anymore a sufficient quantity for finding explicit prices, but we need indeed the Az\'ema supermartingale and its Doob-Meyer decomposition. We also propose a definition of the default event  risk premium, which measures the compensation investors should require for the losses that occur at the default time (this is strictly positive when the default time is totally inaccessible, but null for a predictable default time) and we compute its expression in a general setting (Theorem \ref{decprem}).

\section{Losses and their predictability}\label{sec::losses}

Let $(\Omega, \F,\HH=(\h_t)_{t\geq0},\P)$ be a filtered probability space.   The filtration $\mathbb H=(\mathcal H_t)_{t\geq 0}$ represents an information flow available to all market investors without cost, that is public information. $\tau$ is an $\RR_+$ valued random variable such that $\tau>0$ a.s. which represents the default time of a company, i.e., the time when the company  is not able to meet some of its financial obligations. Since corporate defaults can be observed when they occur,  $\tau$ is supposed to be an $\HH$ stopping time.

We want to deduce some properties of the particular stopping time $\tau$, given the impact of the default on the prices of the traded assets or portfolio of assets which are available in a financial market. An important question is to know under which economic conditions $\tau$ should be modeled as a totally inaccessible stopping time. Let us recall below a classification of stopping times:

\begin{defn}\label{classst}
[Classification of stopping times] A stopping time $\tau$ is said to be:
\begin{itemize}
\item[(i)] predictable if there exists a sequence of stopping times $(\tau_n)  _{n\geq1}$ such that $\tau_n$ is increasing,
$\tau_{n}<\tau$ on $\{ \tau>0\}  $ for all $n$, and $\lim_{n\rightarrow\infty}\tau_{n}=\tau$ $a.s.$. 

\item[(ii)] accessible if there exists a sequence of\ predictable stopping times $(\tau_{n})_{n\geq1} $, such that:
\[
\mathbb{P}\left(  \cup_{n}\left\{  \omega:\tau\left(  \omega\right)
=\tau _{n}\left(  \omega\right)  <\infty\right\}  \right)  =1.
\]

\item[(iii)] totally inaccessible if, for every predictable stopping time $T$,
\[
\mathbb{P}\left(  \left\{  \omega:\tau\left(  \omega\right)
=T\left( \omega\right)  <\infty\right\}  \right)  =0.
\]
\end{itemize}
\end{defn}

In the line of the now classical asset modeling and pricing, we consider a collection of $\HH$-adapted processes $(S^i_t)_{t\geq 0}$, $i\in\{0,1,...,n\}$  with right continuous with left hand limits sample paths, which represents the evolution of the prices of the assets which are traded in a financial market. Without any loss of generality, we can suppose in this section that the money market account $(S^0_t)$ is constant or, equivalently, that the interest rates are null. We suppose that prices are in equilibrium when regarded as stochastic processes in this filtration, that is, there exist a probability measure $\Q\sim\P$ such that $(S^i_t)$ is a locally bounded local martingale for any $i\in\{1,...,n\}$.  

We assume that there is at least one defaultable asset which is traded on the market (for instance a bond), whose price process is $(S^d_t)$ for some $d\in\{1,...,n\}$. Furthermore we make the following crucial assumption: 
\begin{itemize}\item[\textbf{(L)}] Suppose that $(S^d_t)$ is the price of a defaultable claim. Assume that $\Delta S^d_\tau <0 \; a.s.$, i.e., there is a loss in case of default with probability one.
\end{itemize}
Since the default of a debtor is perceived as a bad news for the creditors, it is reasonable to assume that bond prices or, more generally, prices of defaultable claims, will certainly decrease in the moments after the announcement of the default. Or, in a continuous time, arbitrage free market model, prices decrease with certainty only by negative jumps. Hence, it is reasonable to assume that prices of defaultable claims display negative jumps at the default time. Symmetrically, short positions in defaultable claims would display positive jumps.

We show below that the characteristics of a default time of being predictable or not are in fact intimately related to the validity of the assumption \textbf{(L)}. Therefore, in practical applications, the debate about the properties of a default time can be oriented towards clarifications of the validity of the assumption \textbf{(L)}, which can be tested using market data. 

First let us introduce the restrictions of a stopping time as follows: if $\tau$ is a stopping time and $E\in\F$, then the restriction of $\tau$ to $E$ is $\tau(E):=\tau\I_E+\infty\I_{E^c}$. Notice $\tau(E)$ is a stopping time if and only if $E\in \h_\tau$. We now recall the decomposition of a stopping time. 

\begin{thm}[\cite{cdellacherie}] Let $\tau$ be a $\HH$- finite stopping time. There exists an essentially unique partition of $\Omega$ in two elements $A$ and $B$ of $\h_{\tau-}$ such that the time $\tau(A)$ is accessible and $\tau(B)$ is totally inaccessible. Hence: 
\begin{equation}\label{AB}
\tau=\tau(A)\wedge\tau(B).
\end{equation} $\tau(A)$ is called the accessible part of $\tau$ and $\tau(B)$ is called the totally inaccessible part of $\tau$.
\end{thm}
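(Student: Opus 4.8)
The plan is to construct the accessible set $A$ by a maximality argument and then to show, arguing by contradiction, that its complement $B$ carries the totally inaccessible part; essential uniqueness will follow from the fact that a stopping time cannot be at once accessible and totally inaccessible on a set of positive probability. Throughout I would use the standard reformulation of accessibility — a stopping time $\sigma$ is accessible iff its graph $[\![\sigma]\!]$ is covered by the graphs of countably many predictable stopping times, which for a finite $\sigma$ is equivalent to Definition \ref{classst}(ii) — and I note that since $\h_{\tau-}\subseteq\h_\tau$, the restriction $\tau(H)$ is an $\HH$-stopping time for every $H\in\h_{\tau-}$, so asking whether it is accessible or totally inaccessible makes sense. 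Set $\mathcal{A}:=\{H\in\h_{\tau-}:\tau(H)\text{ is accessible}\}$. Two stability properties drive the argument: (a) if $H\in\mathcal{A}$ and $H'\subseteq H$ with $H'\in\h_{\tau-}$, then $[\![\tau(H')]\!]\subseteq[\![\tau(H)]\!]$, so $\tau(H')$ is accessible; and (b) $\mathcal{A}$ is closed under countable unions: given $H_k\in\mathcal{A}$, disjointify them into pairwise disjoint $H_k'\in\h_{\tau-}$ with the same union $H$, so that $[\![\tau(H)]\!]$ is the disjoint union of the $[\![\tau(H_k')]\!]$, each covered by countably many predictable graphs, hence so is $[\![\tau(H)]\!]$. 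Step (b) is exactly where accessibility, rather than predictability, is essential, since predictability is not preserved under such countable unions. Consequently $\alpha:=\sup\{\P(H):H\in\mathcal{A}\}$ is attained: pick $H_k\in\mathcal{A}$ with $\P(H_k)\uparrow\alpha$ and put $A:=\bigcup_kH_k\in\mathcal{A}$, so that $\P(A)=\alpha$. Put $B:=A^c\in\h_{\tau-}$; then $\{A,B\}$ partitions $\Omega$, which together with $\tau<\infty$ a.s. gives \eqref{AB}, and $\tau(A)$ is accessible by construction.

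Next I would prove that $\tau(B)$ is totally inaccessible, by contradiction. If it is not, there is a predictable stopping time $R$ with $\P(\tau(B)=R<\infty)>0$; set $C:=B\cap\{\tau=R<\infty\}$. The key step — and, I expect, the main technical obstacle — is the measurability statement $C\in\h_{\tau-}\cap\h_{R-}$, which reduces to the classical fact that for a predictable stopping time $R$ and any stopping time $\tau$ the event $\{\tau=R<\infty\}$ lies in $\h_{\tau-}\cap\h_{R-}$; this is checked by taking a sequence announcing $R$ and expressing $\{\tau=R<\infty\}$ through the $\h_s$, $s$ rational. Granting it, $\tau(C)$ coincides with the restriction $R(C)$ of the predictable time $R$ to the set $C\in\h_{R-}$, hence $\tau(C)$ is predictable, in particular accessible, so $C\in\mathcal{A}$. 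But $C\subseteq B=A^c$ and $\P(C)>0$, so $A\cup C\in\mathcal{A}$ with $\P(A\cup C)=\alpha+\P(C)>\alpha$, contradicting the maximality of $A$. Therefore $\tau(B)$ is totally inaccessible.

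For essential uniqueness, let $(A',B')$ be another partition of $\Omega$ into sets of $\h_{\tau-}$ with $\tau(A')$ accessible and $\tau(B')$ totally inaccessible, and put $D:=A\setminus A'\in\h_{\tau-}$. Since $D\subseteq A$, property (a) makes $\tau(D)$ accessible; since $D\subseteq B'$ and $\{\tau(D)=S<\infty\}\subseteq\{\tau(B')=S<\infty\}$ for every predictable $S$, the time $\tau(D)$ is also totally inaccessible. Covering $[\![\tau(D)]\!]$ by the graphs of predictable times $(R_n)$ then yields $\P(D)=\P(\tau(D)<\infty)=\P\!\left(\bigcup_n\{\tau(D)=R_n<\infty\}\right)\le\sum_n\P(\tau(D)=R_n<\infty)=0$. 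By symmetry $\P(A'\setminus A)=0$, so $A=A'$ up to a $\P$-null set, which is the asserted essential uniqueness of the partition.
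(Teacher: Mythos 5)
The paper states this theorem with a citation to Dellacherie and gives no proof of its own, so there is no in-paper argument to compare against; I therefore assess your proposal on its own terms. The exhaustion scheme — forming the class $\mathcal{A}$ of sets $H\in\h_{\tau-}$ on which $\tau(H)$ is accessible, proving stability under $\h_{\tau-}$-subsets and countable unions, extracting a $\P$-maximal $A\in\mathcal{A}$, and deducing total inaccessibility on $B=A^{c}$ by contradiction — is sound and is essentially the classical route, and your essential-uniqueness argument is correct. The one genuine error is in what you yourself flag as the key measurability step: the claim that $\{\tau=R<\infty\}\in\h_{R-}$ for a predictable $R$ and an arbitrary stopping time $\tau$ is \emph{false} in general. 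For instance, take $\h_t=\h_0$ for $t\in[0,1)$, $\h_1=\h_0\vee\sigma(E)$ with $0<\P(E)<1$; then $R\equiv 1$ is predictable, $T:=\I_E+2\,\I_{E^c}$ is a stopping time, yet $\{T=R\}=E\notin\h_0=\h_{R-}$. (Indeed, were the claim true, the restriction of a predictable time $R$ to any set of $\h_R$ would again be predictable, and accessible-but-not-predictable stopping times could not exist, contradicting the Bernoulli example given after the Corollary in Section \ref{sec::losses}.) The valid half is $\{\tau=R<\infty\}\in\h_{\tau-}$, which does follow as you sketch from the announcing sequence $R_n\uparrow R$ and rational time cuts. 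Fortunately the $\h_{R-}$ half is superfluous for your argument: once you have $C:=B\cap\{\tau=R<\infty\}\in\h_{\tau-}$, the restriction $\tau(C)$ is a genuine stopping time, and it is accessible simply because its graph is contained in the graph of the single predictable time $R$; you do not need, and in general cannot assert, that $\tau(C)$ is itself predictable. With that correction the contradiction with the maximality of $A$ goes through unchanged and the proof is complete.
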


\begin{rem}
Let us point out that the decomposition (\ref{AB}) is stable under equivalent changes of the probability measure. 
\end{rem}

\begin{prop} Under  \textbf{(L)}, the default time $\tau$ does not have a predictable part, that is,  there does not exist $E\in\h_\tau$ such that $\tau(E)$ is a finite predictable stopping time. In particular, when it is finite, the accessible part of $\tau$ is not predictable. \end{prop}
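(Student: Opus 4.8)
The plan is to argue by contradiction, exploiting the fact that a local martingale cannot have a jump of constant sign at a predictable time, a behaviour that is incompatible with assumption \textbf{(L)}. Suppose there were a set $E\in\h_\tau$ with $\Q(E)>0$ (this is forced, in particular, by $\tau(E)$ being ``finite'') such that $T:=\tau(E)$ is a predictable $\HH$-stopping time. By the definition of the restriction, $\{T<\infty\}=E$ and $T=\tau$ on $E$; moreover $T>0$ a.s. since $\tau>0$ a.s. Because $\Q\sim\P$, assumption \textbf{(L)} holds under $\Q$ as well, so that on $E$ one has $\Delta S^d_T=\Delta S^d_\tau<0$ $\Q$-a.s.

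The core claim is that $\E_\Q[\Delta S^d_T\mid\h_{T-}]=0$ on $\{T<\infty\}$ --- equivalently, that the single-jump process $\Delta S^d_T\,\I_{[T,\infty)}=\Delta S^d_\tau\,\I_E\,\I_{[\tau,\infty)}$ is a $\Q$-local martingale --- which is the standard statement that a local martingale carries no drift at a predictable time. Granting this, one integrates over the ($\h_{T-}$-measurable) set $E$ to get $\E_\Q\!\big[\Delta S^d_\tau\,\I_E\big]=\E_\Q\!\big[\Delta S^d_T\,\I_{\{T<\infty\}}\big]=0$, whereas $\Delta S^d_\tau\,\I_E$ is $\Q$-a.s. negative on $E$ and $\Q(E)>0$: a contradiction. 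Hence no such $E$ exists, i.e.\ $\tau$ has no predictable part. The ``in particular'' assertion follows immediately by taking $E=A$, the defining set of the accessible part: were the finite accessible part $\tau(A)$ predictable, it would contradict what has just been proved.

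The step requiring care --- the one I would write out in full --- is the core claim, because $S^d$ is assumed only to be a \emph{locally bounded} local martingale, not a genuine martingale, and one cannot stop $T$ by the localizing sequence of $S^d$ without destroying the predictability of $T$. Instead I would combine the two sequences: let $(T_n)$ announce $T$ (so $T_n\uparrow T$, $T_n<T$) and let $(\sigma_k)$ be a localizing sequence with $(S^d)^{\sigma_k}$ a bounded martingale. Optional sampling for the bounded martingale $(S^d)^{\sigma_k}$ gives $\E_\Q[S^d_{T\wedge\sigma_k}\mid\h_{T_n}]=S^d_{T_n\wedge\sigma_k}$; letting $n\to\infty$ and using $\bigvee_n\h_{T_n}=\h_{T-}$ (valid since $T$ is predictable), the right-continuity of $S^d$, and the facts that $S^d_{T-}$ is $\h_{T-}$-measurable and $\{T\le\sigma_k\}\in\h_{T-}$, one obtains $\E_\Q[\Delta S^d_T\,\I_{\{T\le\sigma_k\}}\mid\h_{T-}]=0$ for every $k$ (note $\Delta S^d_T\,\I_{\{T\le\sigma_k\}}$ is bounded, so no integrability obstruction arises here); integrating over $E$ then yields $\Q\big(E\cap\{T\le\sigma_k\}\big)=0$ because the integrand is $\Q$-a.s.\ negative there by \textbf{(L)}, and letting $k\to\infty$ gives $\Q(E)=0$, the desired contradiction. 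Alternatively, one may simply invoke the classical fact that $\Delta M_T\,\I_{[T,\infty)}$ is a local martingale whenever $M$ is a local martingale and $T$ is predictable. In either case the argument is insensitive to the choice between $\P$ and $\Q$, since these are equivalent and both the sign of $\Delta S^d_\tau$ and the predictability of $T$ are preserved under an equivalent change of measure.
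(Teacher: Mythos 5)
Your proof is correct and rests on the same pillar as the paper's: the predictable stopping theorem forces the expected jump of a local martingale at a predictable time to vanish, which contradicts the strict negativity of $\Delta S^d_\tau$ imposed by \textbf{(L)}. The only real divergence is in the localization step, and here your worry is a red herring. You claim one cannot stop the predictable time $T=\tau(E)$ by a localizing sequence without destroying predictability, and so you re-derive the vanishing-jump fact by hand, combining an announcing sequence $(T_n)\uparrow T$ with a localizing sequence $(\sigma_k)$ and passing to the limit in $\E_\Q[S^d_{T\wedge\sigma_k}\mid\h_{T_n}]=S^d_{T_n\wedge\sigma_k}$. That derivation is sound (one small slip: the convergence $S^d_{T_n}\to S^d_{T-}$ comes from the existence of left limits, not right-continuity), but the standard route, and the paper's, never stops $T$ at all --- it stops the \emph{process}. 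Take a reducing sequence $(T^n)$ so that $S^n:=(S^d)^{T^n}$ is a uniformly integrable martingale; the time $\tau(E)$ is left untouched and remains predictable, and the predictable stopping theorem applies directly to the pair $(S^n,\tau(E))$, giving $\E[\Delta S^n_{\tau(E)}]=0$. Since $\Delta S^n_{\tau(E)}=\Delta S^d_\tau\,\I_E\,\I_{\{\tau\le T^n\}}$ and $\Delta S^d_\tau<0$ a.s., one gets $\P(E\cap\{\tau\le T^n\})=0$ for every $n$, and $n\to\infty$ yields $\P(E)=0$. So both proofs reach the same endpoint; yours is more self-contained in that it effectively reconstructs the predictable stopping theorem from optional sampling and the announcing sequence, while the paper's is shorter by localizing the process and then invoking that theorem as a black box.
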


\proof It is equivalent to show the result under an equivalent  local martingale measure for $S^d$, i.e., such that the process $(S^d_t)$ is a local martingale. Therefore we assume without loss of generality that $\P$ is a local martingale measure. Denote $\delta:=\Delta S^d_\tau$. By \textbf{(L)}, $\delta <0$ $a.s.$. 

Suppose that there exists a set $E\in \h_\tau$ such that $\tau(E)$ is a predictable stopping time. Then we need to show that $\P(E)=0$ hence $\tau(E)=\infty$  $a.s.$. 

Let $(T^n)_{n\in \mathbb{N}}$ be a reducing sequence of stopping times for $(S^d_t)$, that is $(T^n)$ increases to $\infty$ and the stopped process $S^n_t:=S^d_{t\wedge T^n}$ is a uniformly integrable martingale. The predictable stopping theorem states that:
\[
\E[\Delta S^n_{\tau(E)}|\h_{\tau(E)-}]=0 \text { and implies that } \E\left[\Delta S^n_{\tau(E)}\right]=0.
\]
Notice that $\Delta S^n_{\tau(E)}=\delta\I_{E}\I_{\{\tau\leq T^n\}}$ therefore we must have:
\[
\E[\delta\I_{E\cap\{\tau\leq T^n\}}]=0.
\]
Since $\delta$ is strictly negative it follows that 
\begin{equation}\label{pn}
\P(E\cap\{\tau\leq T^n\})=0 \text{ for all $n\in \mathbb{N}$. }
\end{equation}
Notice that $\{\tau\leq T^n\}\subset \{\tau\leq T^{n+1}\}$ hence taking the limit in (\ref{pn}) as $n\to \infty$ leads to $P(E)=0$.
\finproof

The above proposition reveals that the structural models of default are not compatible with the assumption  \textbf{(L)}, since the default times in these models are modeled as hitting times of some observable (i.e., $\HH$ adapted) diffusions, or jump diffusions, hence have a predictable part. A way to obtain the condition  \textbf{(L)} from a structural model is to assume that the process triggering the default is not $\HH$ adapted or in other words is not observable by common market investors (see for instance the incomplete information models in \cite{DuffLand}, \cite{GieseGold1}, \cite{cetjarproy},  \cite{GuoJarrZeng}, \cite{cocgemjea08}, \cite{FreySchmidt}). A general discussion about the links between structural, reduced-form and imperfect information models is provided in \cite{JarrPrott}.

It seems that totally inaccessible stopping times are suitable to be used when one is interested to model situations characterized in the assumption  \textbf{(L)}. Let us point out that under \textbf{(L)} the default time can nevertheless have an accessible part in some types of filtrations, since there exist in general accessible stopping times which are not predictable. 

\begin{cor} Suppose that the filtration $\HH$ is quasi left-continuous i.e., $\h_S=\h_{S-}$ for all predictable stopping times $S$. Under  \textbf{(L)}, the default time $\tau$ is a totaly inacessible stopping time.
\end{cor}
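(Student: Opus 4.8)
\proof (Sketch of the approach.)
The plan is to combine Dellacherie's decomposition~(\ref{AB}) with the preceding Proposition. Since $\tau$ is $\RR_+$-valued, hence finite, write $\tau=\tau(A)\wedge\tau(B)$, where $A,B\in\h_{\tau-}$ partition $\Omega$, $\tau(A)$ is accessible and $\tau(B)$ is totally inaccessible. It then suffices to prove that $\P(A)=0$: indeed, if this holds then $\tau=\tau(B)$ a.s., and for every predictable stopping time $T$ we get $\P(\tau=T<\infty)=\P(\tau(B)=T<\infty)=0$, so $\tau$ is totally inaccessible.

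To establish $\P(A)=0$ I would exploit the accessibility of $\tau(A)$: there is a sequence $(T_n)_{n\in\mathbb{N}}$ of predictable stopping times such that $A\subseteq\bigcup_n\{\tau=T_n\}$ a.s. Fix $n$ and set $E_n:=A\cap\{\tau=T_n\}$, so that $E_n\in\h_\tau$ and $\tau(E_n)=T_n(E_n)$. The key step is to show $E_n\in\h_{T_n-}$; this is precisely where quasi left-continuity is used. The elementary comparison lemma $\h_\tau\cap\{\tau=T_n\}\subseteq\h_{T_n}$ (valid for any two stopping times, by reducing everything on the event $\{\tau=T_n\}$ to the common value of $\tau$ and $T_n$) gives $E_n\in\h_{T_n}$, and then $\h_{T_n}=\h_{T_n-}$ by hypothesis. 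Hence $\tau(E_n)$, being the restriction of the predictable stopping time $T_n$ to a set of $\h_{T_n-}$, is itself a predictable stopping time.

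Now the Proposition applies to $E=E_n$: as is clear from its proof, $\P(E)=0$ for any $E\in\h_\tau$ such that $\tau(E)$ is predictable (the finiteness of $\tau(E)$ off $E$ plays no role). Thus $\P(E_n)=0$ for every $n$, and summing over $n$, using $A\subseteq\bigcup_n E_n$ a.s., yields $\P(A)\leq\sum_n\P(E_n)=0$, as required. The only genuinely delicate point is the measurability bookkeeping, namely the identity $\h_\tau\cap\{\tau=T_n\}\subseteq\h_{T_n-}$ under quasi left-continuity and, beforehand, the choice of the covering sequence $(T_n)$; both are routine from the general theory of processes, but they are exactly where the quasi left-continuity assumption is indispensable: without $\h_{T_n}=\h_{T_n-}$ one only obtains $E_n\in\h_{T_n}$, which need not make $\tau(E_n)$ predictable, and an accessible but non-predictable part of $\tau$ may genuinely survive. (Equivalently, one may first record the classical fact that in a quasi left-continuous filtration every accessible stopping time is predictable, apply it to $\tau(A)$, and then invoke the Proposition directly; the argument above is just that fact unwound in the situation at hand.)
\finproof
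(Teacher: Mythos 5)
Your proof is correct and amounts to the same argument as the paper's: the paper simply cites the classical fact (Dellacherie, T51 p.~62) that in a quasi left-continuous filtration every accessible stopping time is predictable, and then (implicitly) invokes the preceding Proposition to rule out a predictable part. You essentially reprove that classical fact inline — via the covering sequence $(T_n)$, the sets $E_n\in\h_{T_n-}$, and the predictability of the restrictions $T_n(E_n)$ — and you acknowledge this yourself in the closing parenthetical, so the route is the same, just unfolded.
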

\begin{proof}
If the filtration $\HH$ is quasi left-continuous,  then all accessible stopping times are predictable (T51, page 62, \cite{cdellacherie}).
\end{proof}

Quasi left continuity of a filtration is not a very intuitive notion form an economical point of view. Let us point  out that $\HH$ is quasi left-continuous for instance when the filtration is generated by a L\'evy process, this being the class of models the most commonly used in modeling security prices. Therefore, the above result suggest that different filtrations (i.e., which are not quasi left-continuous) should be used if one wants to capture some specific features of debt, such that possibilities to default at some fixed dates in time together with the assumption \textbf{(L)}. In this case, accessible but not predictable stopping times should be used. We provide below an example of such a construction:
\begin{ex} 
Suppose that a bond is expected to pay coupons at the fixed dates $t^i, i=1,...,n$ and the bond can default only at one of the coupon dates. We begin with a filtration $\FF$ and a sequence $(b^i)_{i\geq 1}$ of Bernoulli random variables independent from each other and independent from the filtration $\FF$. We denote $i^*:=\inf\{i|b^i=1\}$,  the default time: $\tau=t^{i^*}\I_{\{i^*\leq n\}}+\infty \I_{\{i^*>n\}}$ and the market filtration: $\h_t:=\F_t\vee \sigma(\tau\wedge t)$ for  $t\geq 0$. It is easy to check that  the default time $\tau$  is an accessible but not predictable $\HH$- stopping time, so that the assumption \textbf{(L)} can be implemented in this setting.
\end{ex}

\section{The two-step construction of the information set}\label{sec::finmodel}

In this short section we fix the mathematical framework that will be used for the rest of the paper. Following \cite{bsjeanblanc}, \cite{elliotjeanbyor}, \cite{jenbrutk1}, we shall construct a default model in two steps. The main idea is to separate those assets which do not default at time $\tau$, form those who are defaultable at $\tau$. Defaultable assets are those issued by the particular company we are analyzing, while the remaining traded assets (that, to simplify, we call default-free) are issued by other companies or by  governments, they may as well be indexes, commodities or derivatives: they should all have different default times than $\tau$.  
Let $(S^i)_{i\in I}$, with $I\subset \{1,...,n\}$ represent the price processes of the default-free assets and $\FF$ their natural $\P$ augmented filtration. The default model is built using the information $\FF$ about the default-free assets (instead of all the public information $\HH$) and the default arrival as follows.

Let $(\Omega,\F,\FF=(\F_t)_{t\geq 0},\P)$ be a filtered probability space satisfying the usual assumptions. The default time $\tau$ is defined as a random time (i.e., a nonnegative $\F$-measurable random variable) which is not an $\FF$-stopping time.  We assume throughout that $\P(\tau=\infty)=0$.

Then, a second filtration $\GG=(\G_t)_{t\geq 0}$ is obtained by progressively enlarging the filtration $\FF$ with the random time $\tau$: $\GG$ is the smallest filtration satisfying the usual assumptions, containing the original filtration $\FF$, and for which $\tau$ is a stopping time, such as explained in \cite{jeulin}, \cite{jeulinyor}:
$$\G_{t}=\mathcal{K}_{t+}^{o} \quad \text{where} \quad \mathcal{K}_{t}^{o}=\F_{t}\vee\sigma(\tau\wedge t).$$

The advantage of this construction  is that it relies on projections of some $\GG$ adapted processes onto the filtration $\FF$ and therefore displays some finer properties of the default time, which are very useful for pricing and hedging. 

\textit{Notation.} If $X$ is a measurable process, we  denote by $\:^oX$ (resp. $\:^pX$) its  $\FF$ optional (resp. $\FF$ predictable) projection and if $X$ is increasing we denote by $X^o$ (resp. $X^p)$ its  $\FF$ dual optional (resp. $\FF$ dual predictable) projection. The definitions of these notions can be found in the Appendix.

The following processes shall play a crucial role in our discussion:
\begin{itemize}
\item the $\FF$ supermartingale
\begin{equation}
Z_t^\tau=\mathbf{P}\left[ \tau >t\mid \mathcal{F}_{t}\right]=\;^o(1_{\{ \tau>\cdot\} })_t
\label{surmart}
\end{equation}
chosen to be c\`{a}dl\`{a}g, associated with $\tau $\ by Az\'{e}ma
(\cite{azema}), (note that $Z_t>0$ on the set $\{t<\tau\}$); 

\item the $\FF$ dual projections $A_{t}^{\tau }:=(\I_{\{ \tau \leq
\cdot\}})^o_t$ and $a_{t}^{\tau }=(\I_{\left\{ \tau \leq \cdot \right\}})^p_t$;

\item the c\`{a}dl\`{a}g martingale
\begin{equation*}
\mu _{t}^{\tau }=\mathbf{E}\left[ A_{\infty }^{\tau }\mid \mathcal{F}_{t}\right] =A_{t}^{\tau }+Z_{t}^{\tau }.
\end{equation*}

\item the Doob-Meyer decomposition of (\ref{surmart}):
\begin{equation*}
Z_t^\tau=m_{t}^{\tau }-a_{t}^{\tau },
\end{equation*} where $m^{\tau}$ is an $\FF$-martingale.
\end{itemize}
A common assumption in the literature is that the random time $\tau$ avoids the $\FF$ stopping times, that is $\P(\tau=T)=0$ for any $\FF$ stopping time $T$. In this case, $A^\tau=a^\tau$ is continuous, a property useful later on. 

The Az\'ema supermartingale $(Z^\tau_t)$  is the main tool for computing  the $\GG$ predictable compensator of $\I_{\{\tau\leq t\}}$:
\begin{thm}[\cite{yorjeulin}]\label{calccomp}
The process:
$$N_t:=\I_{\{\tau\leq t\}}-\int_{0}^{t\wedge\tau}\frac{1}{Z_{s-}^{\tau}}da_{s}^{\tau}$$
is a $\GG$ martingale.
\end{thm}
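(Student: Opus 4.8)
The plan is to check directly that $N$ is a $\GG$-martingale by testing it against $\GG$-predictable integrands. First I would record that $N$ is of integrable variation: $\I_{\{\tau\le\cdot\}}$ has total variation $1$, while the increasing process $\int_{0}^{\cdot\wedge\tau}(Z_{s-}^{\tau})^{-1}da_s^{\tau}$ has expectation $\E\bigl[\int_{0}^{\infty}(Z_{s-}^{\tau})^{-1}\I_{\{s\le\tau\}}\,da_s^{\tau}\bigr]=\E[a_\infty^{\tau}]=\P(\tau<\infty)=1$ by the projection identity established below, so $\E[\mathrm{Var}(N)_\infty]\le2$. Therefore $N$ is a $\GG$-martingale if and only if $\E\bigl[\int_{]0,\infty[}H_s\,dN_s\bigr]=0$ for every bounded $\GG$-predictable $H$; indeed $H=\I_G\I_{]s,t]}$ with $G\in\G_s$ is $\GG$-predictable and returns $\E[\I_G(N_t-N_s)]=0$. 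Now the signed measure $dN$ is carried by the stochastic interval $[\![0,\tau]\!]$ (both $d\I_{\{\tau\le s\}}$, a unit mass at $s=\tau$, and $(Z_{s-}^{\tau})^{-1}\I_{\{s\le\tau\}}\,da_s^{\tau}$ are), and, by the standard fact that a $\GG$-predictable process agrees on $[\![0,\tau]\!]$ with an $\FF$-predictable one, it suffices to show, for every bounded $\FF$-predictable $h$ (and by linearity we may take $h\ge0$), that $\E\bigl[\int_{]0,\infty[}h_s\,dN_s\bigr]=0$, i.e.
\[
\E[h_\tau]=\E\Bigl[\int_{0}^{\infty}\frac{h_s\,\I_{\{s\le\tau\}}}{Z_{s-}^{\tau}}\,da_s^{\tau}\Bigr].
\]

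For the left-hand side, $h_\tau=\int_{0}^{\infty}h_s\,d\I_{\{\tau\le s\}}$, and since $h$ is $\FF$-predictable and $a^{\tau}$ is by definition the $\FF$-dual predictable projection of $\I_{\{\tau\le\cdot\}}$, we get $\E[h_\tau]=\E\bigl[\int_{0}^{\infty}h_s\,da_s^{\tau}\bigr]$. For the right-hand side, $h_s(Z_{s-}^{\tau})^{-1}$ is $\FF$-predictable and $a^{\tau}$ is an $\FF$-predictable increasing process, so the projection property of dual predictable projections lets us replace, under $\E[\,\cdot\,da^{\tau}]$, the $\GG$-adapted factor $\I_{\{s\le\tau\}}$ by its $\FF$-predictable projection:
\[
\E\Bigl[\int_{0}^{\infty}\frac{h_s}{Z_{s-}^{\tau}}\,\I_{\{s\le\tau\}}\,da_s^{\tau}\Bigr]=\E\Bigl[\int_{0}^{\infty}\frac{h_s}{Z_{s-}^{\tau}}\;{}^{p}(\I_{\{\cdot\le\tau\}})_s\,da_s^{\tau}\Bigr].
\]
The whole theorem thus reduces to the identity ${}^{p}(\I_{\{\cdot\le\tau\}})_s=Z_{s-}^{\tau}$, together with the standard fact that $Z_{s-}^{\tau}>0$ holds $da^{\tau}$-almost everywhere on $[\![0,\tau]\!]$ (which makes the quotient meaningful): granting these, the right-hand side collapses to $\E\bigl[\int_{0}^{\infty}h_s\,da_s^{\tau}\bigr]=\E[h_\tau]$, as desired.

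It remains to establish ${}^{p}(\I_{\{\cdot\le\tau\}})_s=Z_{s-}^{\tau}$. Split $\I_{\{s\le\tau\}}=\I_{\{s<\tau\}}+\I_{[\![\tau]\!]}(s)$. Since $Z^{\tau}={}^{o}(\I_{\{\tau>\cdot\}})$, the composition rule ${}^{p}X={}^{p}({}^{o}X)$ gives ${}^{p}(\I_{\{\cdot<\tau\}})={}^{p}(Z^{\tau})$; feeding in the Doob--Meyer decomposition $Z^{\tau}=m^{\tau}-a^{\tau}$ (with $m^{\tau}$ an $\FF$-martingale, so ${}^{p}(m^{\tau})=m_{-}^{\tau}$, and $a^{\tau}$ predictable, so ${}^{p}(a^{\tau})=a^{\tau}$) yields ${}^{p}(Z^{\tau})_s=m_{s-}^{\tau}-a_s^{\tau}=Z_{s-}^{\tau}-\Delta a_s^{\tau}$. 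On the other hand ${}^{p}(\I_{[\![\tau]\!]})_s=\Delta a_s^{\tau}$, because the jump process of a dual predictable projection is the predictable projection of the jump process (applied to $\I_{\{\tau\le\cdot\}}$ and $a^{\tau}$). Adding, the $\Delta a^{\tau}$ terms cancel and ${}^{p}(\I_{\{\cdot\le\tau\}})_s=Z_{s-}^{\tau}$.

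I expect this predictable projection identity to be the only delicate point. In the classical case where $\tau$ avoids the $\FF$-stopping times, $Z^{\tau}$ and $a^{\tau}$ are continuous, $\I_{\{\cdot\le\tau\}}$ and $\I_{\{\cdot<\tau\}}$ share the continuous optional projection $Z^{\tau}$, and one gets ${}^{p}(\I_{\{\cdot\le\tau\}})=Z^{\tau}=Z_{-}^{\tau}$ with nothing to prove; in general one really needs the Doob--Meyer decomposition to keep track of the atoms of $a^{\tau}$ and the jumps of $Z^{\tau}$, as well as the standard fact that the measure $da^{\tau}$ charges neither $\{Z^{\tau}=0\}$ nor $\{Z_{-}^{\tau}=0\}$ on $[\![0,\tau]\!]$, so that $(Z_{-}^{\tau})^{-1}$ is $da^{\tau}$-integrable there. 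Everything else is routine bookkeeping with optional and predictable projections.
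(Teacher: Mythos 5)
Your proof is correct and is essentially the classical Jeulin--Yor argument that the paper cites without reproducing: reduce to testing against $\FF$-predictable integrands via the restriction lemma for $\GG$-predictable processes on $[\![0,\tau]\!]$, and then establish the predictable projection identity ${}^{p}(\I_{\{\cdot\le\tau\}})=Z_{-}^{\tau}$ by combining the Doob--Meyer decomposition of $Z^{\tau}$ with the jump identity $\Delta a^{\tau}={}^{p}(\I_{[\![\tau]\!]})$. One presentational remark: the integrable-variation bound at the start invokes the projection identity ``established below,'' so to avoid the appearance of circularity you should reorder the argument---prove the identity ${}^{p}(\I_{\{\cdot\le\tau\}})=Z_{-}^{\tau}$ first, deduce $\E\bigl[\int(Z_{s-}^{\tau})^{-1}\I_{\{s\le\tau\}}\,da_s^{\tau}\bigr]=\P(\tau<\infty)$ by taking $h=1$, and only then invoke the characterization of martingales of integrable variation.
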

We shall assume in the rest of the paper that:
\begin{itemize}
\item[\textbf{ (H)}] Every $\FF$ (local-)martingale is a $\GG$ (local-)martingale.
We say that the filtration $\FF$ is immersed in $\GG$ and denote this property by the symbol: $\FF\hookrightarrow \GG$.
\end{itemize}
The immersion property was studied in \cite{bremaudyor}, \cite{delmeyfil}. This assumption can be related to absence of arbitrages in the market model (see \cite{bsjeanblanc}, \cite{jealec08}, \cite{cocjeanik09}). 

It is known that in our framework  \textbf{(H)} is equivalent to the following: For all $s\leq t$,
\begin{equation}\label{H}
\mathbf{P}\left[ \tau \leq s\mid \mathcal{F}_{t}\right]
=\mathbf{P}\left[ \tau \leq s\mid \mathcal{F}_{\infty }\right] .
\end{equation}
It follows that under  \textbf{(H)} , the Az\'ema supermartingae is a decreasing process, i.e., 
\begin{equation}\label{Zdecr}
Z^\tau=1-A^\tau.
\end{equation} 

Moreover, when the immersion property holds, simple projection formulas for stochastic integrals hold \cite{bremaudyor}:

\begin{prop}[\cite{bremaudyor}]\label{bremyorproj} Suppose that $\FF\hookrightarrow \GG$.
\begin{itemize}
\item[(i)] Let $M$ be an $\FF$ local martingale and $H$ a bounded process. Then:
\begin{equation*}
\;^o\left(\int HdM\right)=\int \;^o\left(H\right)dM.
\end{equation*}
\item[(ii)] If $M$ is a $\GG$ square integrable martingale and $H$ an $\FF$ bounded process. Then:
\begin{equation*}
\;^o\left(\int HdM\right)=\int Hd\;^o(M).
\end{equation*}
\end{itemize}
\end{prop}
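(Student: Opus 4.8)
The plan is to prove (i) and (ii) together, by a monotone-class reduction to elementary integrands followed by a direct computation in which the immersion hypothesis \textbf{(H)} enters through its conditional-independence form. So I would first record that reformulation: \eqref{H} is equivalent to the assertion that, for every $t\ge 0$, the $\sigma$-fields $\F_\infty$ and $\G_t$ are conditionally independent given $\F_t$; in particular, for all $r\le t$ and every integrable $\G_r$-measurable random variable $X$ one has $\E[X\mid\F_t]=\E[X\mid\F_r]=\E[X\mid\F_\infty]$. (This follows from \eqref{H} because $\G_r=\F_r\vee\sigma(\tau\wedge r)$ and \eqref{H} makes the $\F_\infty$-conditional law of $\tau$ restricted to $\{\tau\le r\}$ already $\F_r$-measurable.)

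Next I would reduce to elementary integrands in the usual way: both sides of each identity are linear in $H$, and --- using dominated convergence for stochastic integrals together with the fact that the $\FF$-optional projection commutes with bounded monotone limits --- the class of bounded $H$ for which a given identity holds is stable under such limits. Since the bounded $\GG$-predictable (resp.\ $\FF$-predictable) processes are generated in this sense by the elementary processes $H=\xi\,\I_{(a,b]}$, $0\le a\le b$, with $\xi$ bounded and $\G_a$- (resp.\ $\F_a$-) measurable, it suffices to treat such an $H$; and, after a routine localisation --- legitimate under \textbf{(H)}, and needed only in (i) --- I may assume every martingale below is uniformly integrable. For such an $H$, $\int H\,dM$ is the martingale $Y_u=\xi\,(M_{b\wedge u}-M_{a\wedge u})$, which vanishes on $\{u\le a\}$, so only $\{u>a\}$ matters.

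In case (i), $M$ is an $\FF$ local martingale, hence a $\GG$ local martingale by \textbf{(H)}, so $M_{b\wedge u}-M_{a\wedge u}$ is $\F_u$-measurable and ${}^o(Y)_u=(M_{b\wedge u}-M_{a\wedge u})\,\E[\xi\mid\F_u]$; the conditional independence above gives $\E[\xi\mid\F_u]=\E[\xi\mid\F_a]=:\widehat\xi$ for all $u\ge a$, so ${}^o(H)=\widehat\xi\,\I_{(a,b]}$ (in particular ${}^o(H)$ is $\FF$-predictable) and
\[
{}^o(Y)_u=\widehat\xi\,(M_{b\wedge u}-M_{a\wedge u})=\int_0^u{}^o(H)\,dM.
\]
In case (ii), $M$ is a $\GG$ square integrable martingale and $\xi$ is $\F_a\subseteq\F_u$-measurable on $\{u>a\}$; applying the conditional independence to $M_{b\wedge u}$ and $M_{a\wedge u}$, which are measurable for $\G_{b\wedge u}\subseteq\G_u$ and $\G_{a\wedge u}\subseteq\G_u$ respectively, gives $\E[M_{b\wedge u}\mid\F_u]={}^o(M)_{b\wedge u}$ and $\E[M_{a\wedge u}\mid\F_u]={}^o(M)_{a\wedge u}$, whence
\[
{}^o(Y)_u=\xi\bigl(\E[M_{b\wedge u}\mid\F_u]-\E[M_{a\wedge u}\mid\F_u]\bigr)=\xi\bigl({}^o(M)_{b\wedge u}-{}^o(M)_{a\wedge u}\bigr)=\int_0^u H\,d{}^o(M),
\]
${}^o(M)=\E[M_\infty\mid\F_\cdot]$ being the associated $\FF$ square integrable martingale. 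In both cases the two processes being compared are c\`adl\`ag $\FF$-martingales coinciding at every fixed time, hence indistinguishable; together with the reduction this completes the proof.

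The one step I expect to be genuinely delicate is the identification ${}^o(\xi\,\I_{(a,b]})=\widehat\xi\,\I_{(a,b]}$ in case (i): a priori the $\FF$-optional projection of $\xi\,\I_{(a,b]}$ is only $\FF$-optional and equals $\I_{(a,b]}$ times the $\FF$-martingale $(\E[\xi\mid\F_s])_{s\ge0}$, against which the stochastic integral does \emph{not} reduce to $\widehat\xi\,(M_{b\wedge\cdot}-M_{a\wedge\cdot})$. It is precisely \textbf{(H)} that forces this martingale to be constant on $[a,\infty)$ --- making ${}^o(\xi\,\I_{(a,b]})$ predictable and collapsing the integral; if immersion is dropped the formulas fail, so the hypothesis is used exactly here.
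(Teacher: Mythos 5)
The paper states this proposition as a cited result from Br\'emaud and Yor (1978) and does not prove it, so there is no in-paper proof against which to compare your argument. On its own merits, your proof is correct and is essentially the standard route one would take (and, I believe, close in spirit to the original): reduce to elementary integrands $H=\xi\,\I_{(a,b]}$ by linearity and a monotone-class/dominated-convergence argument, then use the conditional-independence reformulation of \textbf{(H)} (namely that $\G_r$ and $\F_\infty$ are conditionally independent given $\F_r$, hence $\E[\xi\mid\F_u]=\E[\xi\mid\F_a]$ for $u\ge a$ and $\xi$ bounded $\G_a$-measurable) to make the projection collapse. Your diagnosis of where \textbf{(H)} enters is exactly right: without it, $^o(\xi\I_{(a,b]})_s=\I_{(a,b]}(s)\,\E[\xi\mid\F_s]$ is a genuinely time-varying process and the integral does not simplify; \textbf{(H)} freezes $\E[\xi\mid\F_s]$ at its value $\widehat\xi$ for $s\ge a$, which simultaneously makes $^o(H)$ $\FF$-predictable (so $\int{}^o(H)\,dM$ is well defined) and yields the identity. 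The identification $\E[M_{b\wedge u}\mid\F_u]={}^o(M)_{b\wedge u}$ in case (ii) likewise follows from the same conditional independence applied at time $b\wedge u\le u$, and the observation $^o(M)=\E[M_\infty\mid\F_\cdot]$ confirms that $^o(M)$ is a square integrable $\FF$-martingale so that the right-hand side is meaningful.

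One place where you are a little terse is the monotone-class closure step: one must check that both maps $H\mapsto{}^o\!\left(\int H\,dM\right)$ and $H\mapsto\int{}^o(H)\,dM$ (resp.\ $\int H\,d\,{}^o(M)$) pass to bounded pointwise monotone limits, which requires (a) dominated convergence for the optional projection evaluated at stopping times, and (b) dominated convergence for stochastic integrals against the fixed local martingale, plus agreement between the limiting processes up to indistinguishability via c\`adl\`ag modifications. These are routine under the localisation you invoke, but stating them explicitly would close the argument cleanly. Everything else is sound.
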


\section{Decompositions of a stopping time}\label{sec::generalprop}

We now investigate some general properties of a stopping time $\tau$ of a filtration $\GG$ constructed by progressively enlarging a filtration $\FF$, as explained in Section \ref{sec::finmodel}.   We begin with a financial example and then show a general construction. 

Suppose that $T^k, k\in N^* \subset \mathbb{N}^*$ are unexpected times when negative shocks occur in the economy. They are finite, totally inaccessible stopping times in the filtration $\FF$. Morover $\P(T^i=T^j)=0, i\neq j$. Now, we suppose that a firm's financial health is affected by these shocks, and this impact may be so severe that can trigger the default of the company i.e.:
\[
\P(\tau=T^k)>0, \forall k\in N^*.
\]
Denote: $p^k_t:=\P(\tau=T^k|\F_t)$ such that we obtain: $$\sum_{k\in N^*} p^k_t\leq 1.$$

\begin{lem}
\begin{itemize}

\item[(a)] If $\sum_{k\in N^*} p^k_0 =1$ then $\tau$ is a $\GG$ totally inaccessible stopping time.

\item[(b)] If $\tau$ is a $\GG$ totally inaccessible stopping time and $\sum_{k\in N^*} p^k_0 <1$, then  there exists a $\GG$ totally inaccessible stopping time $T^0$ such that:
\begin{equation}\label{taudecomp1}
\tau=\sum_{i\in N} T^i \I_{\{T^i=\tau\}}.
\end{equation}
 where $N=N^* \cup \{0\}$ and hence $\sum_{k\in N} p^k_t=1$, where $p^0_t=\P(\tau=T^0|\F_t)$.

\end{itemize}
\end{lem}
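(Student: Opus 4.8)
The plan is to reduce both parts to two auxiliary facts and then to a bookkeeping argument on disjoint events. \emph{Fact 1:} under \textbf{(H)}, each $T^k$, $k\in N^*$, being $\FF$-totally inaccessible, is also $\GG$-totally inaccessible. I would prove this through the compensator: since $T^k$ is totally inaccessible in $\FF$, the $\FF$-dual predictable projection $B^k$ of $\I_{\{T^k\le\cdot\}}$ is continuous; the process $\I_{\{T^k\le\cdot\}}-B^k$ is a bounded (hence uniformly integrable) $\FF$-martingale, so by \textbf{(H)} it is a $\GG$-martingale, and since $B^k$ is also $\GG$-predictable and increasing, uniqueness of the Doob--Meyer decomposition in $\GG$ shows that $B^k$ is the $\GG$-compensator of $\I_{\{T^k\le\cdot\}}$ as well; a stopping time whose predictable compensator is continuous is totally inaccessible. \emph{Fact 2:} if $\sigma$ is a $\GG$-totally inaccessible stopping time and $E\in\G_\sigma$, then the restriction $\sigma(E)=\sigma\I_E+\infty\I_{E^c}$ is a $\GG$-totally inaccessible stopping time; this is immediate since for any $\GG$-predictable $S$ one has $\P(\sigma(E)=S<\infty)=\P(E\cap\{\sigma=S<\infty\})\le\P(\sigma=S<\infty)=0$. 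I would also record at the outset three elementary points: $\tau$ is a $\GG$-stopping time by construction of $\GG$; each $T^k$ is trivially a $\GG$-stopping time, so $E_k:=\{\tau=T^k\}\in\G_{\tau\wedge T^k}\subseteq\G_\tau$; and since $\P(T^i=T^j)=0$ for $i\ne j$, the events $E_k$, $k\in N^*$, are pairwise disjoint up to a null set.

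For part (a), the hypothesis $\sum_{k\in N^*}p^k_0=1$ says exactly that $\P\big(\bigcup_{k\in N^*}E_k\big)=1$, i.e.\ $\tau$ agrees a.s.\ with one of the $T^k$. Then for any $\GG$-predictable stopping time $S$, decomposing along the partition $(E_k)$ and using that on $E_k$ the event $\{\tau=S\}$ coincides with $\{T^k=S\}$, I get $\P(\tau=S<\infty)=\sum_{k\in N^*}\P\big(E_k\cap\{T^k=S<\infty\}\big)\le\sum_{k\in N^*}\P(T^k=S<\infty)=0$ by Fact 1. Hence $\tau$ is $\GG$-totally inaccessible.

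For part (b), I would set $D:=\Omega\setminus\bigcup_{k\in N^*}E_k=\{\tau\ne T^k\text{ for all }k\in N^*\}$, which lies in $\G_\tau$, and define $T^0:=\tau(D)=\tau\I_D+\infty\I_{D^c}$. This is a $\GG$-stopping time, and by Fact 2 it is $\GG$-totally inaccessible because $\tau$ is (this is the only place the hypothesis that $\tau$ is $\GG$-totally inaccessible is used). Since $\P(\tau=\infty)=0$, one checks that $\{T^0=\tau\}=D$ up to a null set, so with $N=N^*\cup\{0\}$ the family $\{\{T^i=\tau\}:i\in N\}$ is a partition of $\Omega$ up to a null set (the $E_k$ are mutually disjoint, $D$ is disjoint from each of them, and $D\cup\bigcup_{k\in N^*}E_k=\Omega$). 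On each atom $\{T^i=\tau\}$ only the $i$-th summand of $\sum_{i\in N}T^i\I_{\{T^i=\tau\}}$ survives and equals $\tau$ — on $D^c$ the $i=0$ term is $\infty\cdot 0=0$ — so (\ref{taudecomp1}) holds a.s. Finally, conditioning the partition identity $1=\sum_{i\in N}\I_{\{T^i=\tau\}}$ on $\F_t$ gives $\sum_{k\in N}p^k_t=1$ with $p^0_t=\P(\tau=T^0\mid\F_t)$; the hypothesis $\sum_{k\in N^*}p^k_0<1$ enters only to ensure $\P(D)>0$, i.e.\ that $T^0$ genuinely contributes.

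The main obstacle is Fact 1, the transfer of the compensator from $\FF$ to $\GG$: one must verify that $\I_{\{T^k\le\cdot\}}-B^k$ is an honest uniformly integrable $\FF$-martingale so that \textbf{(H)} applies (it is, being bounded with $B^k_\infty$ integrable), and invoke the characterization ``predictable compensator continuous $\Longleftrightarrow$ totally inaccessible''. Everything else — restrictions of stopping times, disjointness of the $E_k$, and the conditional-expectation identity — is routine.
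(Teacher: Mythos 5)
Your proof is correct and takes essentially the same route as the paper: for (b) you construct $T^0$ as the restriction of $\tau$ to $\{\tau\ne T^k,\ \forall k\in N^*\}$ (identical to the paper's $T^0$) and derive its total inaccessibility from that of $\tau$ by the restriction argument, which is exactly the paper's argument. Your Fact 1 (transfer of the $\FF$-total inaccessibility of $T^k$ to $\GG$ via \textbf{(H)} and the Doob--Meyer uniqueness) is the needed ingredient for (a) that the paper dismisses as ``follows from the definition''; spelling it out is a useful, and correct, addition.
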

\proof
(a) Follows from the definition of a totally inaccessible stopping time. 

(b) Denote:

\begin{align}\label{tzero}
T^0:& =\tau\I_{\{\tau \neq T^i, \forall i\in N^*\}}+\infty \I_{\{\exists i\in N^*, \tau=T^i\}}\\
&=\tau \Pi_{i\in N^*}\I_{\{\tau \neq T^i\}}+\infty \sum_{i\in N^*} \I_{\{\tau = T^i\}} \nonumber
\end{align}

Obviously $T^0$ is a $\GG$ stopping time since $\{T^0 <t\}=\{\tau<t\} \bigcap_{i\in N^*} \{T^i>\tau\}$ is in $\G_t$. Let $S$ be any predictable $\GG$ stopping time. Then, $\P(T^0=S<\infty)=\P(\tau\I_{\{\tau \neq T^i, \forall i\in N^*\}}=S<\infty)\leq \P(\tau=S<\infty)=0$ since $\tau$ is totally inaccessible.
\finproof

\begin{rem}
Without loss of generality, we may assume that $T^0$ avoids all $\FF$ finite stopping times, that is:
\[
\P(T^0=T <\infty) =0 \text{ for any $\FF$ stopping time } T.
\] Indeed, since it is totally inaccessible, $T^0$ avoids the finite predictable and accessible $\FF$ stopping times. If there exists an $\FF$ stopping time $T^{s}$  such that $\P(T^0=T^{s}<\infty)>0$, then, from \ref{taudecomp1} $s \notin N^*$.  We denote $\bar T^0= T^0\I_{\{T^0\neq T^{s}\}}+\infty \I_{\{T^0 = T^{s}\}}$. Then the times $T^i, i\in N^*\cup \{s\}$ are the times of economic shocks and $\bar T^0$ is $\GG$ totally inaccessible which avoids all finite $\FF$ stopping times. Also $\tau=\sum_{i\in N^*\cup \{s\}} T^i \I_{\{T^i=\tau\}}+\bar T^0 \I_{\{\bar T^0=\tau\}}.$
\end{rem}
We now show that it is natural to construct default models as above, since any stopping time $\tau$ admits decompositions involving sequences of $\FF$ stopping times as follows: 

\begin{prop}\label{gendecomp} Let $\tau$ be a finite $\GG$ stopping time. Then, there exists a sequence $(T^i)_{i\geq 1}$, of   $\FF$ stopping times, such that 
\begin{equation}\label{prop1}
\P(T^i=T^j <\infty) =0 \quad i\neq j
\end{equation} and 
\begin{equation}\label{prop2}
\P(\tau=T^i)>0\text{ whenever } \P(T^i<\infty)>0
\end{equation}
and  a totally inaccessible $\GG$ stopping time $T^0$ such that $T^0$ avoids all finite $\FF$  stopping times and such that:
\begin{equation}\label{taudecomp}
\tau=\sum_{i\geq 0} T^i \I_{\{T^i=\tau\}}.
\end{equation}

The $\GG$ stopping time $\tau$ is totally inaccessible if and only if the $\FF$ stopping times $(T^i)_{i\geq 1}$ are totally inaccessible.
\end{prop}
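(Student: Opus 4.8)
The plan is to build the sequence $(T^i)_{i \geq 1}$ by exhausting the $\FF$-stopping-time "atoms" of $\tau$, and then isolate the totally inaccessible residue $T^0$ exactly as in the preceding Lemma. First I would collect the relevant values: consider the family of all $\FF$-stopping times $T$ with $\P(\tau = T < \infty) > 0$. Using the graph of $\tau$ viewed through the filtration $\FF$, this family is countable up to the equivalence "$\P(S = T < \infty) > 0$"; more precisely, the $\GG$-optional set $\llbracket \tau \rrbracket$ decomposes into its accessible part and its totally inaccessible part (Dellacherie's decomposition, already quoted), and the accessible part of a $\GG$-stopping-time graph is covered by countably many graphs of $\FF$-predictable — hence $\FF$-stopping — times, while also intersections of $\llbracket \tau \rrbracket$ with graphs of $\FF$-totally-inaccessible stopping times that meet it with positive probability are again countable (two graphs $\llbracket S \rrbracket, \llbracket S' \rrbracket$ with $\P(\tau = S = S' < \infty) > 0$ can be merged, and the total mass $\P(\tau < \infty) = 1$ is finite, so only countably many disjoint pieces carry positive mass). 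This yields a sequence $(T^i)_{i \geq 1}$ of $\FF$-stopping times, which after the usual disjointification (replacing $T^i$ by $T^i$ restricted to the $\F$-measurable set where it equals $\tau$ but no $T^j$, $j<i$, does — these restriction sets lie in $\h_{\tau}$, indeed one checks they lie in $\F_{T^i}$ after intersecting with $\{\tau = T^i\}$) satisfies (\ref{prop1}) and (\ref{prop2}).

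Next I would define $T^0$ by the same formula as (\ref{tzero}) in the Lemma, namely $T^0 := \tau \Pi_{i \geq 1}\I_{\{\tau \neq T^i\}} + \infty \sum_{i \geq 1}\I_{\{\tau = T^i\}}$, and check it is a $\GG$-stopping time (its strict-past sets are built from $\{\tau < t\}$ and the events $\{T^i > \tau\} \in \G_t$, exactly as before). By construction $T^0$ avoids every $\FF$-stopping time that meets $\tau$ with positive probability, and hence — after the normalization in the Remark, folding any stray coincident $\FF$-stopping time into the sequence — avoids all finite $\FF$-stopping times. That $T^0$ is totally inaccessible follows because on $\{T^0 < \infty\}$ it coincides with $\tau$ off a set that contains no graph of an $\FF$-stopping time, so its accessible part (which would be carried by such graphs, by Dellacherie) is empty. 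The decomposition (\ref{taudecomp}) is then a tautology: every $\omega$ with $\tau(\omega) < \infty$ either has $\tau(\omega) = T^i(\omega)$ for some (unique, after disjointification) $i \geq 1$, or falls in the $T^0$ set.

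For the final equivalence, one direction is easy: if $\tau$ is $\GG$-totally inaccessible and $S$ is an $\FF$-predictable stopping time, then for each $i \geq 1$, $\P(T^i = S < \infty) $ — restricted to the set $\{T^i = \tau\}$, which has positive probability whenever $\P(T^i < \infty) > 0$ — is dominated by $\P(\tau = S < \infty) = 0$; but $S$ is also $\GG$-predictable (predictable stopping times of $\FF$ remain predictable in the larger $\GG$), and $\{T^i = \tau\} \in \G_{\tau}$, so this forces $\P(T^i = S < \infty) = 0$ on that set, and off $\{T^i = \tau\}$ we may have replaced $T^i$ by $\infty$, hence $T^i$ is $\FF$-totally inaccessible. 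The converse is the main obstacle: assuming each $T^i$, $i \geq 1$, is $\FF$-totally inaccessible and $T^0$ is $\GG$-totally inaccessible, I must show $\tau$ itself is $\GG$-totally inaccessible. The plan is to take a $\GG$-predictable stopping time $S$ and show $\P(\tau = S < \infty) = 0$ by splitting $\{\tau = S\}$ across the pieces $\{\tau = T^i\}$, $i \geq 0$. On $\{\tau = T^0\}$ one uses that $T^0$ is $\GG$-totally inaccessible directly. On $\{\tau = T^i\}$, $i \geq 1$, one has $\{ \tau = S\} \cap \{\tau = T^i\} = \{T^i = S\} \cap \{\tau = T^i\}$, so it suffices that $\P(T^i = S < \infty) = 0$ for the $\FF$-totally-inaccessible $T^i$ and the $\GG$-predictable $S$ — and here is exactly where the hypothesis \textbf{(H)} enters: under immersion, the $\GG$-predictable dual projection of $\I_{\llbracket S \rrbracket}$ restricted to $\FF$-information behaves well, or more directly, one invokes that under $\FF \hookrightarrow \GG$ a $\GG$-predictable stopping time $S$ has the property that $\P(T = S < \infty) = 0$ for every $\FF$-totally inaccessible $T$ (since the $\GG$-accessible part of $\llbracket S \rrbracket$, which is all of it, is covered by $\FF$-predictable graphs after projecting — using Proposition \ref{bremyorproj} to transfer between the filtrations). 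Assembling these pieces over the countable index set and using countable subadditivity gives $\P(\tau = S < \infty) = 0$, completing the proof; I expect the delicate point to be making precise the claim that $\GG$-predictable stopping times avoid $\FF$-totally inaccessible ones under \textbf{(H)}, which should follow from the immersion-projection formulas together with the fact that the $\GG$-accessible part of a graph is $\FF$-reachable.
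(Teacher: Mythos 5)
Your plan is recognizable but it misses the key tool the paper uses, and several of your justifications do not hold up. The paper's proof first establishes Lemma \ref{ja}: for an $\FF$-stopping time $T$, $\P(\tau=T)>0$ if and only if the $\FF$-dual optional projection $A^\tau$ of $\I_{\{\tau\leq\cdot\}}$ charges $T$, i.e.\ $\Delta A^\tau_T>0$ with positive probability. Since $A^\tau$ is an $\FF$-adapted, c\`adl\`ag, increasing process, its jump set is exhausted by a sequence of $\FF$-stopping times with essentially disjoint graphs; defining $(T^i)_{i\geq 1}$ that way gives (\ref{prop1}) and (\ref{prop2}) at once, and $T^0$ is the restriction of $\tau$ to $\left(\cup_{i\geq 1}\{\tau=T^i\}\right)^c$, which avoids all finite $\FF$-stopping times precisely because $A^\tau$ has no jumps off $\cup_{i\geq1}\llbracket T^i\rrbracket$. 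You do not use $A^\tau$ or Lemma \ref{ja} at all, and your substitute arguments contain errors. First, your assertion that the $\GG$-accessible part of $\llbracket\tau\rrbracket$ is covered by countably many graphs of $\FF$-predictable stopping times is false: Dellacherie's covering theorem for a $\GG$-stopping time yields $\GG$-predictable graphs, not $\FF$-predictable ones, so this gives no control on coincidences with $\FF$-stopping times. (Your separate mass argument for the totally inaccessible part is fine, and in fact the same mass argument works across the board, but you cannot invoke the accessibility decomposition of $\tau$ in $\GG$ to conclude anything about $\FF$-stopping times.) Second, your disjointification replaces $T^i$ by its restriction to $\{\tau=T^i\}\setminus\cup_{j<i}\{\tau=T^j\}$; this set is generally \emph{not} in $\F_{T^i}$ since $\tau$ is not $\FF$-adapted, so the restricted time is not an $\FF$-stopping time. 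Moreover (\ref{prop1}) only asks for essential disjointness of the graphs $\llbracket T^i\rrbracket$ among themselves, which is automatic for any exhausting sequence of the jump set of $A^\tau$; it does not ask for disjointness of the events $\{\tau=T^i\}$.

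On the final equivalence, your identification of the delicate point is correct: one needs that under \textbf{(H)}, a $\GG$-predictable stopping time $S$ avoids any $\FF$-totally-inaccessible $T^i$. But your sketched justification (``the $\GG$-accessible part of $\llbracket S\rrbracket$ is $\FF$-reachable after projecting'') is not a valid argument. The correct, short argument is via compensators: if $T^i$ is $\FF$-totally inaccessible, its $\FF$-compensator $\Lambda^i$ is continuous; by \textbf{(H)} the $\FF$-martingale $\I_{\{T^i\leq t\}}-\Lambda^i_{t\wedge T^i}$ remains a $\GG$-martingale, so $\Lambda^i$ is also the $\GG$-compensator of $T^i$, which is therefore $\GG$-totally inaccessible and hence avoids every $\GG$-predictable $S$. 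The forward direction is easier and you have it essentially right: the jump $\Delta A^\tau_{T^i}>0$ on $\{T^i<\infty\}$, so $\P(T^i=S<\infty)>0$ for an $\FF$-predictable (hence $\GG$-predictable) $S$ would force $\P(\tau=S<\infty)>0$, contradicting total inaccessibility of $\tau$. So the structure of your argument is salvageable, but as written it has a missing central idea (the dual optional projection $A^\tau$ and Lemma \ref{ja}) and two incorrect steps (the $\FF$-predictable covering claim, and the disjointification producing non-$\FF$-stopping times).
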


\begin{proof} We begin with a useful lemma:
\begin{lem}\label{ja}
Let  $T$ be an $\FF$ stopping time.  $\P(T=\tau)>0$ if and only if the event $\{A^\tau_T(\omega)\neq A^\tau_{T-}(\omega)\}$ has a strictly positive probability (one says that  $A^\tau$ charges $T$).
\end{lem}
\begin{proof}
Let us recall that $A^\tau$ is the dual optional projection of the increasing process $H_t:=\I_{\{\tau\leq t\}}$, which has a unique jump of size $1$ at the time $\tau$.  Therefore, for any stopping time $T$ we have: $\Delta H_T=\I_{\{\tau=T\}}$. 
If $T$ is an $\FF$ stopping time, then by Theorem VI.76. in \cite{delmeyV}:
\begin{equation*}
\Delta A^\tau_T=\E[\Delta H_T|\F_T]=\P(\tau=T|\F_T).
\end{equation*} Recall that $A^\tau$ is increasing, hence $\Delta A^\tau$ is nonnegative. The result follows.
\end{proof}

We now prove the proposition. There are two possible situations:
\begin{itemize}
\item[1.]  $A^\tau$ continuous. This corresponds to the situation when $\tau$ avoids all $\FF$ stopping times. Therefore the decomposition (\ref{taudecomp}) holds with all $(T^i)_{i\geq 1}$ infinite a.s. and $T^0=\tau$ $a.s.$.

\item[2.] $A^\tau$ discontinuous.  Let $T^i$, $i\geq1$ be the ordered jump times of $A^\tau$. If there are finitely many, we simply set the remaining stopping times in the sequence to be infinite (notice that (\ref{prop1}) is satisfied by this sequence). We obtain from Lemma \ref{ja}  that:
\begin{itemize}
\item[(i)] $\P(\tau=T^i)>0$ whenever $\P(T^i<\infty)>0$;
\item[(ii)] $\tau$ avoids any $\FF$ stopping time with the graph disjoint of the union of the graphs of $T^i$, $i\geq 1$.
\end{itemize}
We define $T^0$ as the restriction of $\tau$ to the set $E=\left(\cup_{i\geq 1}\{T^i=\tau\}\right)^c$. $T^0$ is a $\GG$ stopping time since $E \in \G_\tau$. By construction, $T^0$ satisfies, 
\begin{equation}\label{t0}
\P(T^0=T^i<\infty)=0\quad \forall i\geq 1.
\end{equation}
From this and $(ii)$ it follows that $T^0$ avoids all finite $\FF$-stopping times. Notice that if $\P(E)=0$ then $ T^0=\infty$ avoids the finite $\FF$ stopping times and is totally inaccessible ($\infty$ is the only time which is both accessible and totally inaccessible). 
\end{itemize}The last statement in the proposition follows from the definition of a totally inaccessible stopping time.
\end{proof}

\begin{rem} From the above proof, we see that we can in fact choose $(T^i)_{i\geq1}$ to be any sequence that exhausts the jump times of the process $(A^\tau_t)$ and therefore the sequence is not uniquely defined. Useful later on will be to notice that the sequence $(T^i)_{i\geq1}$ can be chosen such that it contains only totally inaccessible and predictable $\FF$ stopping times (indeed, one can decompose a stopping time first in its accessible and totally inaccessible part, and then the accessible part in a sequence of predictable times). On the other hand, the time $T^0$ can be uniquely defined (up to  null sets) if chosen to be infinite on the set $\{\tau=T^i, i\geq 1\}$ ( as the construction in equation  (\ref{tzero})  with $N^*=\mathbb{N}^*$).
\end{rem}

Now, we provide an economic interpretation of the sequence $(T^i)$, when they are totally inaccessible stopping times. As the filtration $\FF$ is generated by the prices of default-free claims, there exist some default free assets which have jumps at the $\FF$ stopping times $(T^i)_{i\in \mathbb{N}^*}$ which are not infinite. Therefore, when the default arrives at one of these times, some default-free asset prices  react abruptly by jumps. The interpretation is that the default $\tau$ has a macroeconomic impact (or is triggered by some macroeconomic shock). On the opposite, since $T^0$ avoids all finite $\FF$-stopping times, when default arrives at this time (i.e., on $\{\tau=T^0\}$, there will be no impact of the default event on the prices of the default-free assets, that is no default-free asset price will jump. Therefore, we propose the following:

 \begin{defn}
Let $\tau$ be a totally inaccessible default time. Take a decomposition of $\tau$ as in (\ref{taudecomp}). Then, we shall call $T^i, i\in \mathbb{N}^*$ times of macroeconomic shocks, and $T^0$ is the idiosincratic default time (i.e. when default is due to the unique and specific circumstances of the company, as opposed to the overall market circumstances). 
\end{defn}

In the remaining of this section, we shall derive the useful properties of $\tau$, given properties of a decomposing sequence of  times $(T^i)_{i\in \mathbb{N}}$. $\tau$ and the sequence $(T^i)_{i\in \mathbb{N}}$ will be always supposed to fulfill the properties stated in Proposition \ref{gendecomp} (except $T^0$, they are not necessarily totally inaccessible). Moreover, we suppose $T^0$ to be infinite on the set $\{\tau\neq T^0\}$. 

Let us denote by $(\Lambda^i_{T^i \wedge t})$ the $\FF$-compensators of the $\FF$ stopping times $T^i$, $i\in \mathbb{N}^*$. It follows that $(\Lambda^i_{t\wedge T^i})$, $i\in \mathbb{N}^*$ are $\FF$ adapted and since $\FF \hookrightarrow \GG$ they are also the $\GG$ compensators of $T^i$, $i\in \mathbb{N}^*$.  We introduce the $\FF$ (and $\GG$)-martingales: 
\[
N^i_t:=\I_{\{T^i\leq t\}}-\Lambda_{t\wedge T^i}, \text{ for } i\in \mathbb{N}^*.
\]

On the other hand, the time $T^0$ is not an $\FF$ stopping time. We denote  $a^0_t$ the $\FF$ dual optional projection of $\I_{\{T^0\leq \cdot\}}$. Since $T^0$ avoids all $\FF$ stopping time, $a^0$ is continuous. Let us introduce $\tilde\F_t:=\F_t\vee\sigma(t\wedge T^0)$ and notice that $\tilde\F_t\subset\G_t$, $t\geq 0$.  Since $\FF\hookrightarrow \GG$ it follows that $\FF\hookrightarrow \tilde \FF\subset\GG$, hence the  Az\'ema supermartingale of the time $T^0$ is decreasing (see equation (\ref{Zdecr})) and equals $$Z^0_t:=\P(T^0>t|\F_t)=1-a^0_t.$$ Notice that $Z^0_{\infty}=\P(T^0=\infty|\F_\infty)=1-a^0_\infty.$  Let us also denote $(\Lambda^0_{T^0 \wedge t})$ the $\GG$-compensator of $T^0$, such that: 
\[
N^0_t:=\I_{\{T^0\leq t\}}-\Lambda_{t\wedge T^0}, t\geq 0
\]
is a $\GG$-martingale. 

Now, we give the decomposition of the compensator of $\tau$ given those of the times $(T^i)$. 
\begin{prop}\label{propcompensator} Let $g^i_t: =\P(\tau=T^i|\G_t)$ for $t\geq 0$ and  $i\in \mathbb{N}$.  The predictable compensator of $\I_{\{\tau\leq \cdot\}}$ denoted $(\Lambda_{t\wedge\tau})$ satisfies: 
\begin{equation*}\label{lambda}
\Lambda_{t\wedge\tau}=\sum_{i\geq 1}\int_0^{t}(g^i_{s-}+u^i_s)d\Lambda^i_{s\wedge T^i}+\Lambda^0_{T^0\wedge t},
\end{equation*}
where for $i\geq 1$, $(u^i_t)$ is a predictable process that satisfies  $\langle g^i,N^i\rangle_t=\int_0^t u^i_sd\Lambda^i_{T^i\wedge s}$. 
Hence, the martingale $N_t=\I_{\{\tau\leq t\}}- \Lambda_{t\wedge\tau}$ decomposes as:
\[
N_t=\sum_{i\geq 1}\left(\int_0^{t}(g^i_{s-}+u^i_s)dN^i_s+(\Delta g^i_{T^i}-u^i_{T^i})\I_{\{T^i\leq t\}}\right)+N^0_t.
\]
Therefore $\tau$ has a $\GG$ intensity $\lambda$, i.e., $\Lambda_{t\wedge \tau}=\int_0^{t\wedge \tau}\lambda_s ds$  if and only if $\GG$ intensities exist for the times $T^i$, $i\geq 0$. Then, denoting by $\lambda^i$ the $\GG$ intensity of $T^i$, the following relation holds:
\begin{equation}\label{intensity}
\lambda_t=\sum_{i=1}^n  (g^i_{t}+u^i_s)\lambda^i_{t}\I_{\{T^i>t\}}+\lambda^0_t.
\end{equation}
\end{prop}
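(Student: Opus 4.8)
The plan is to compute the $\GG$-predictable compensator of $\I_{\{\tau\leq\cdot\}}$ by first decomposing $\I_{\{\tau\leq t\}}$ along the decomposition \eqref{taudecomp}, writing
\[
\I_{\{\tau\leq t\}}=\sum_{i\geq 1}\I_{\{T^i=\tau\}}\I_{\{T^i\leq t\}}+\I_{\{T^0\leq t\}},
\]
using that $T^0$ is infinite on $\{\tau\neq T^0\}$ so that $\I_{\{T^0\leq t\}}=\I_{\{T^0=\tau\}}\I_{\{T^0\leq t\}}$. The term $\I_{\{T^0\leq t\}}$ is compensated by $\Lambda^0_{T^0\wedge t}$ by definition of $N^0$. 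For each $i\geq 1$, the idea is to handle the product $\I_{\{T^i=\tau\}}\I_{\{T^i\leq t\}}$ via integration by parts, since $\I_{\{T^i=\tau\}}$ is not adapted but the bounded $\GG$-martingale $g^i_t=\P(\tau=T^i\mid\G_t)$ equals it on $\{T^i\leq t\}$ (indeed $g^i_t=\E[\I_{\{\tau=T^i\}}\mid\G_t]$ and $\{\tau=T^i\}\in\G_{T^i}$, so $g^i_t=\I_{\{\tau=T^i\}}$ for $t\geq T^i$). Hence $\I_{\{T^i=\tau\}}\I_{\{T^i\leq t\}}=g^i_{t\wedge T^i}\I_{\{T^i\leq t\}}$, and one writes $\I_{\{T^i\leq t\}}=N^i_t+\Lambda^i_{t\wedge T^i}$.

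The next step is the integration-by-parts / Itô product rule on $g^i_{t\wedge T^i}\I_{\{T^i\leq t\}}$, or equivalently on the stopped martingale $g^i_{t\wedge T^i}$ times the jump process $\I_{\{T^i\leq t\}}$. Carrying this out gives a $\GG$-martingale part together with a finite-variation part; the finite-variation part is what we need to identify with the compensator. Concretely, $d(g^i_{t\wedge T^i}\I_{\{T^i\leq t\}})$ produces terms $g^i_{s-}\,d\I_{\{T^i\leq s\}}$, $\I_{\{T^i\leq s-\}}\,dg^i_{s\wedge T^i}$, and a covariation term $d[g^i,\I_{\{T^i\leq\cdot\}}]_s$. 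Since $\I_{\{T^i\leq\cdot\}}=N^i+\Lambda^i$ and $\Lambda^i$ is continuous (or at least predictable), $[g^i,\I_{\{T^i\leq\cdot\}}]=[g^i,N^i]$, whose compensator is the predictable bracket $\langle g^i,N^i\rangle_t=\int_0^t u^i_s\,d\Lambda^i_{T^i\wedge s}$ by the very definition of $u^i$ stated in the proposition. Replacing $d\I_{\{T^i\leq s\}}$ by $dN^i_s+d\Lambda^i_{s\wedge T^i}$ and the bracket by $dN^i_s\cdot$-terms plus $u^i_s\,d\Lambda^i_{T^i\wedge s}$, collecting the predictable finite-variation pieces yields $\int_0^t(g^i_{s-}+u^i_s)\,d\Lambda^i_{s\wedge T^i}$, and collecting the martingale pieces yields $\int_0^t(g^i_{s-}+u^i_s)\,dN^i_s$ plus the jump correction $(\Delta g^i_{T^i}-u^i_{T^i})\I_{\{T^i\leq t\}}$ coming from the difference between the optional covariation $[g^i,N^i]$ and its predictable compensator $\langle g^i,N^i\rangle$. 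Summing over $i\geq 1$ and adding the $T^0$ contribution gives both displayed formulas for $\Lambda_{t\wedge\tau}$ and for $N_t$; one must also check that the interchange of sum and integral/compensation is legitimate, which follows because $\sum_i g^i_t\leq 1$ and the $T^i$ have a.s. disjoint graphs, so at most one term jumps at a time and the sum of the finite-variation parts is of integrable variation.

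For the intensity statement, the plan is routine: if each $T^i$, $i\geq 0$, has a $\GG$-intensity $\lambda^i$, then $d\Lambda^i_{s\wedge T^i}=\lambda^i_s\I_{\{T^i>s\}}\,ds$ and $d\Lambda^0_{T^0\wedge t}=\lambda^0_t\I_{\{T^0>t\}}\,dt$ (this last term contributing only up to $\tau$ since $\tau\le T^0$), so the compensator formula becomes $\Lambda_{t\wedge\tau}=\int_0^t\big(\sum_{i\geq 1}(g^i_s+u^i_s)\lambda^i_s\I_{\{T^i>s\}}+\lambda^0_s\I_{\{T^0>s\}}\big)\,ds$, which is absolutely continuous, giving $\lambda$ as in \eqref{intensity} (one uses $g^i_{s-}=g^i_s$ $ds$-a.e.); conversely, if $\tau$ has an intensity then $\Lambda_{t\wedge\tau}$ is absolutely continuous, and since the jumps of $\Lambda_{t\wedge\tau}$ occur exactly at the $T^i$ (at which $\Delta\Lambda_{T^i\wedge T^i}=(g^i_{T^i-}+u^i_{T^i})\Delta\Lambda^i_{T^i}$ must vanish, forcing $\Delta\Lambda^i=0$, i.e.\ $T^i$ totally inaccessible with a continuous compensator), one deduces absolute continuity of each $\Lambda^i$, hence existence of $\lambda^i$. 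The main obstacle I anticipate is the careful bookkeeping in the integration by parts, in particular correctly isolating the non-predictable jump term $(\Delta g^i_{T^i}-u^i_{T^i})\I_{\{T^i\leq t\}}$ and justifying that it is a genuine $\GG$-martingale increment; this rests on the identity $[g^i,N^i]_t-\langle g^i,N^i\rangle_t$ being a martingale together with $[g^i,N^i]_t=\Delta g^i_{T^i}\I_{\{T^i\leq t\}}$ (as $N^i$ jumps only at $T^i$, by a unit amount), and on the corresponding absolute convergence of the series so that termwise compensation is valid.
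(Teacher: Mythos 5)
Your plan follows the same route as the paper: expand $\I_{\{\tau\le t\}}$ along the decomposition, reduce to compensating $g^i_{T^i}\I_{\{T^i\le t\}}$ via integration by parts (the paper writes this directly as $g^i_{T^i}\I_{\{T^i\leq t\}}=\int_0^tg^i_{s-}d\I_{\{T^i\leq s\}}+\Delta g^i_{T^i}\I_{\{T^i\leq t\}}$, which is exactly your product rule after noting that the term $\I_{\{T^i<s\}}\,dg^i_{s\wedge T^i}$ vanishes), and identify the compensator of the jump term $\Delta g^i_{T^i}\I_{\{T^i\le t\}}$ with $\langle g^i,N^i\rangle$.

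There is however a gap in how you justify that last identification. You rely on the chain $\Delta g^i_{T^i}\I_{\{T^i\le t\}}=[g^i,N^i]_t$ (``$N^i$ jumps only at $T^i$, by a unit amount'') and on $[g^i,\I_{\{T^i\le\cdot\}}]=[g^i,N^i]$ because $\Lambda^i$ is ``continuous (or at least predictable)''. Both of these are true only when $\Lambda^i$ is actually continuous, i.e.\ when $T^i$ is totally inaccessible. In this proposition the $T^i$ are not assumed totally inaccessible; in particular, as the Remark after Proposition~\ref{gendecomp} makes clear, they may be predictable, in which case $\Lambda^i_{\cdot\wedge T^i}=\I_{\{T^i\le\cdot\}}$, $N^i\equiv 0$, $\Delta N^i_{T^i}=0$, and yet $\Delta g^i_{T^i}\I_{\{T^i\le t\}}\ne 0$ in general: so the identity $[g^i,\I_{\{T^i\le\cdot\}}]=[g^i,N^i]$ fails there. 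The paper closes this by a case analysis: in the predictable case the compensator of $\Delta g^i_{T^i}\I_{\{T^i\le\cdot\}}$ is $0$ by the predictable stopping theorem, matching $\langle g^i,N^i\rangle=0$. Alternatively, you could avoid the case split by noting that $[g^i,\Lambda^i_{\cdot\wedge T^i}]$ is itself a local martingale whenever $\Lambda^i$ is a predictable finite-variation process, so that $\langle g^i,N^i\rangle$ is still the compensator of $[g^i,\I_{\{T^i\le\cdot\}}]=\Delta g^i_{T^i}\I_{\{T^i\le\cdot\}}$; but as written, your argument does not cover the predictable $T^i$. Apart from this, the bookkeeping and the intensity discussion are fine and consistent with the paper.
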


\proof The result follows from:
\begin{equation}\label{decI}
\I_{\{\tau\leq t\}}=\sum_{i\geq 0}\I_{\{T^i=\tau\}}\I_{\{T^i\leq t\}}=\sum_{i\geq 1} g^i_{T^i}\I_{\{T^i\leq t\}}+\I_{\{T^0\leq t\}}
\end{equation}
since $\{T^0\leq t\}\subset \{\tau=T^0\}$. By dominated convergence $\;^p(\sum_{i\geq 1}g^i_{T^i}\I_{\{T^i\leq t\}})=\sum_{i\geq 1}\;^p(g^i_{T^i}\I_{\{T^i\leq t\}})$. We notice that:
\[g^i_{T^i}\I_{\{T^i\leq t\}}=\int_0^tg^i_{s-}d\I_{\{T^i\leq s\}}+\Delta g^i_{T^i}\I_{\{T^i\leq t\}}
\]
For $i\geq 1$, there exists a  $\GG$-predictable process $(u^i_t)$ such that  $\langle g^i,N^i\rangle_t=\int_0^t u^i_sd\Lambda^i_{T^i\wedge s}$ (see \cite{delmey78}). We have that $\;^p\left(\int g^i_{s-}d\I_{\{T^i\leq \cdot\}}\right)=\int g^i_{s-}d\Lambda^i_{T^i\wedge\cdot}$, therefore we only need to show the equality: $\;^p\left(\Delta g^i_{T^i}\I_{\{T^i\leq t\}}\right)=\langle g^i,N^i\rangle_t$. As we have remarked previously (see the Remark after Proposition \ref{gendecomp}) it is sufficient to consider the cases when $T^i$ is either totally inaccessible or predictable. If $T^i$ is totally inaccessible, then $\Delta g^i_{T^i}\I_{\{T^i\leq t\}}= [g^i,N^i]_t$ since $N^i$ is of finite variation and on $\{T^i\leq t\}$ has one jump of size $1$ at $T^i$, so that the equality is correct. If $T^i$ is predictable, then $\;^p\left(\Delta g^i_{T^i}\I_{\{T^i\leq t\}}\right)=0$ by the predictable stopping theorem. On the other hand, since $N^i\equiv 0$, $\langle g^i,N^i\rangle_t=0$ and again the needed equality is correct.
\finproof

We now have a short lemma:
\begin{lem}\label{pis}
If $\FF\hookrightarrow \GG$, then for $i\in \mathbb{N}^*$, the martingales $p^i_t:=\P(\tau=T^i|\F_t)$ satisfy $p^i_{t}=p^i_{t\wedge T^i}$.
\end{lem}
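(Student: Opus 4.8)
The plan is to prove the slightly stronger fact that the bounded $\FF$-martingale $p^i$ is \emph{stopped at} $T^i$, which I would reduce to the assertion that, under \textbf{(H)}, the terminal value $p^i_\infty$ is $\F_{T^i}$-measurable. The reduction is elementary: $p^i$ is bounded, hence a uniformly integrable $\FF$-martingale with $p^i_\infty=\P[\tau=T^i\mid\F_\infty]$, and by optional sampling $p^i_{T^i}=\E[p^i_\infty\mid\F_{T^i}]$; if $p^i_\infty$ is $\F_{T^i}$-measurable then $p^i_{T^i}=p^i_\infty$, and then $p^i_t=p^i_{t\wedge T^i}$ (on $\{T^i>t\}$ trivially, and on $\{T^i\le t\}$ the variable $p^i_\infty\I_{\{T^i\le t\}}=p^i_{t\wedge T^i}\I_{\{T^i\le t\}}$ is $\F_t$-measurable, so $p^i_t\I_{\{T^i\le t\}}=\E[p^i_\infty\I_{\{T^i\le t\}}\mid\F_t]=p^i_\infty\I_{\{T^i\le t\}}=p^i_{t\wedge T^i}\I_{\{T^i\le t\}}$).

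For the measurability claim, \textbf{(H)} enters as follows. By (\ref{Zdecr}), $A^\tau_s=1-Z^\tau_s$, and by (\ref{H}) taken with $t=s$, $1-Z^\tau_s=\P[\tau\le s\mid\F_s]=\P[\tau\le s\mid\F_\infty]$; hence $A^\tau_s=\P[\tau\le s\mid\F_\infty]$ for every $s\ge 0$. Thus the c\`{a}dl\`{a}g $\FF$-adapted process $A^\tau$ coincides, at each fixed time, with a c\`{a}dl\`{a}g version of the regular conditional distribution function of $\tau$ given $\F_\infty$; two c\`{a}dl\`{a}g processes that agree at every fixed time are indistinguishable, so we may evaluate this identity at the $\FF$-stopping time $T^i$ (which is $\F_\infty$-measurable). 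This gives $A^\tau_{T^i}=\P[\tau\le T^i\mid\F_\infty]$ and $A^\tau_{T^i-}=\P[\tau<T^i\mid\F_\infty]$, whence
\[
\Delta A^\tau_{T^i}=\P[\tau\le T^i\mid\F_\infty]-\P[\tau<T^i\mid\F_\infty]=\P[\tau=T^i\mid\F_\infty]=p^i_\infty .
\]
Since $A^\tau$ is $\FF$-adapted, $\Delta A^\tau_{T^i}$, and hence $p^i_\infty$, is $\F_{T^i}$-measurable; combined with the reduction this proves the lemma. (On $\{T^i=\infty\}$ everything is vacuous, as $\{\tau=T^i\}$ is then null.)

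The one delicate point is the passage from ``$A^\tau_s=\P[\tau\le s\mid\F_\infty]$ a.s.\ for each fixed $s$'' to the same identity at the \emph{random} time $T^i$, which is exactly why it should be phrased through indistinguishability of c\`{a}dl\`{a}g processes; everything else is routine bookkeeping with optional sampling and conditional expectations. I would also remark that the argument uses nothing special about the $T^i$ (neither that they exhaust the jumps of $A^\tau$ nor that they are totally inaccessible), so under \textbf{(H)} one has $p_t=p_{t\wedge T}$ for $p_t:=\P[\tau=T\mid\F_t]$ and \emph{any} $\FF$-stopping time $T$.
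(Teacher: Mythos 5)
Your proof is correct and genuinely different from the paper's. The paper proves the lemma by working on the enlarged filtration: it notes that $g^i_t:=\P(\tau=T^i\mid\G_t)$ is a $\GG$-martingale stopped at $T^i$ (since $\{\tau=T^i\}\in\G_{T^i}$), invokes the immersion fact that $\E[G\mid\F_\infty]=\E[G\mid\F_{T^i}]$ for $\G_{T^i}$-measurable $G$, together with Proposition \ref{bremyorproj}, to get $p^i_\infty=\E[g^i_\infty\mid\F_\infty]=\E[g^i_{T^i}\mid\F_\infty]=\E[g^i_{T^i}\mid\F_{T^i}]=p^i_{T^i}$. You instead stay entirely inside $\FF$: you use \textbf{(H)} in the equivalent form $A^\tau_t=\P[\tau\le t\mid\F_\infty]$ (coming from $Z^\tau=1-A^\tau$ and \eqref{H}), upgrade this to a process identity by indistinguishability of c\`adl\`ag modifications, evaluate at $T^i$, and identify $p^i_\infty=\Delta A^\tau_{T^i}$, which is $\F_{T^i}$-measurable since $A^\tau$ is optional and $A^\tau_-$ is predictable. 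Your reduction from ``$p^i_\infty$ is $\F_{T^i}$-measurable'' to ``$p^i$ is stopped at $T^i$'' is the same optional-sampling bookkeeping both arguments need. What your route buys is that it never invokes $\GG$-conditional expectations or the Br\'emaud--Yor projection formula, only the Az\'ema supermartingale and the dual optional projection, and it makes transparent that $p^i_\infty$ is concretely the jump of $A^\tau$ at $T^i$ --- a fact the paper uses elsewhere via its Lemma \ref{ja} but does not exploit here. You are also right that the argument gives the statement for an arbitrary $\FF$-stopping time $T$, not only the $T^i$ in the decomposing sequence. The one point you rightly flag --- passing from ``equal a.s.\ at each fixed $t$'' to ``equal at the random time $T^i$'' --- is exactly where the indistinguishability step (plus the substitution property of regular conditional distributions at the $\F_\infty$-measurable time $T^i$) is needed, and you have handled it.
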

\proof
Let us point out that if $\FF\hookrightarrow \GG$ then for any $\FF$ stopping time $T$ and and $\G_T$ measurable random variable $G$ we have:
\[
\E[G|\F_\infty]=\E[G|\F_T].
\]
Then, the result follows easily from the fact that $p^i$ is the optional projection of the martingale $g^i$ which also satisfies  $g^i_{t}=g^i_{t\wedge T^i}$ and from Proposition \ref{bremyorproj}:
\[
p^i_\infty=\E[g^i_{\infty}|\F_{\infty}]=\E[g^i_{T^i}|\F_{\infty}]=\E[g^i_{T^i}|\F_{T^i}]=p^i_{T^i}.
\]
\finproof

The next proposition  contains the important properties of  $\tau$ seen as an $\FF$ random time.  

\begin{prop}\label{Azema} Suppose that $\FF\hookrightarrow \GG$. The Az\'ema supermartingale of $\tau$ is given by:
\begin{equation}\label{Z}
Z^\tau_t:=1-\left(\sum_{i\geq 1} p^i_{T^i}\I_{\{T^i\leq t\}}+a^0_t\right),
\end{equation}
and its Doob-Meyer decomposition is:
\begin{equation*}
Z^\tau_t=\left(1-\hat N_t \right)-a^\tau_t,
\end{equation*}
where:
\[
a^\tau_t:=\sum_{i\geq1}\int_0^t (p^i_{s-}+\upsilon^i_s) d\Lambda^i_{T^i\wedge s}+a^0_s;
\]
\begin{equation}\label{Nhat}
\hat N_t: =\E[N_t|\F_t]=\sum_{i\geq 1} \int_0^t (p^i_{s-}+\upsilon_{s})dN^i_{s}+(\Delta p^i_{T^i}-\upsilon^i_{T^i})\I_{\{T^i\leq t\}}
\end{equation}
and for $i\geq 1$, $\upsilon^i$ is an $\FF$ predictable processes that satisfies  $\langle N^i,p^i\rangle =\int \upsilon^i_sd\Lambda^i_{s\wedge T^i}$.
 \end{prop}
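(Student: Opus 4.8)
The plan is to obtain the Az\'ema supermartingale $(\ref{Z})$ by taking $\FF$-optional projections in the decomposition of $\I_{\{\tau\leq\cdot\}}$ established in the proof of Proposition \ref{propcompensator}, and then to read off the Doob--Meyer decomposition by splitting that projection into an $\FF$-martingale plus an $\FF$-predictable increasing process, invoking uniqueness of the Doob--Meyer decomposition at the end. First I would write $Z^\tau_t=1-{}^{o}(\I_{\{\tau\leq\cdot\}})_t=1-\P(\tau\le t\mid\F_t)$ and, using $(\ref{decI})$, $\I_{\{\tau\leq t\}}=\sum_{i\geq 1}g^i_{T^i}\I_{\{T^i\leq t\}}+\I_{\{T^0\leq t\}}$, take $\FF$-optional projections, interchanging projection with the series by monotone convergence (the partial sums are increasing and bounded by $1$). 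Since for $i\geq 1$ we have $g^i_{T^i}\I_{\{T^i\leq t\}}=\I_{\{\tau=T^i\}}\I_{\{T^i\leq t\}}$ a.s.\ and $\{T^i\leq t\}\in\F_t$, its $\FF$-optional projection equals $\I_{\{T^i\leq t\}}\,\P(\tau=T^i\mid\F_t)=\I_{\{T^i\leq t\}}\,p^i_{T^i}$ by Lemma \ref{pis}; and ${}^{o}(\I_{\{T^0\leq\cdot\}})_t=\P(T^0\leq t\mid\F_t)=1-Z^0_t=a^0_t$ since $Z^0=1-a^0$. Summing gives $(\ref{Z})$.

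Next I would show that $a^\tau$, as defined in the statement, is the $\FF$-dual predictable projection of $\I_{\{\tau\leq\cdot\}}$, equivalently of the $\FF$-adapted increasing process ${}^{o}(\I_{\{\tau\leq\cdot\}})=\sum_{i\geq 1}p^i_{T^i}\I_{\{T^i\leq\cdot\}}+a^0$. Each $p^i_{T^i}\I_{\{T^i\leq\cdot\}}$ is $\FF$-adapted and increasing; decomposing $p^i_{T^i}\I_{\{T^i\leq t\}}=\int_0^tp^i_{s-}\,d\I_{\{T^i\leq s\}}+\Delta p^i_{T^i}\I_{\{T^i\leq t\}}$ and treating separately, as permitted by the Remark after Proposition \ref{gendecomp}, the case $T^i$ totally inaccessible (where $N^i$ has a single unit jump at $T^i$, so $[p^i,N^i]=\Delta p^i_{T^i}\I_{\{T^i\leq\cdot\}}$ has $\FF$-predictable compensator $\langle p^i,N^i\rangle=\int\upsilon^i_s\,d\Lambda^i_{T^i\wedge s}$) and the case $T^i$ predictable (where $N^i\equiv 0$, $\upsilon^i\equiv 0$ on the support of $d\Lambda^i$, and $\Delta p^i_{T^i}\I_{\{T^i\leq\cdot\}}$ is already an $\FF$-martingale by the predictable stopping theorem), one finds that the $\FF$-dual predictable projection of $p^i_{T^i}\I_{\{T^i\leq\cdot\}}$ is $\int_0^\cdot(p^i_{s-}+\upsilon^i_s)\,d\Lambda^i_{T^i\wedge s}$, while $a^0$ (being continuous) is its own. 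Hence $a^\tau$ is $\FF$-predictable, increasing, null at $0$, and $\hat N:={}^{o}(\I_{\{\tau\leq\cdot\}})-a^\tau$ is an $\FF$-martingale; carrying out the term-by-term subtraction in the same two cases recovers exactly formula $(\ref{Nhat})$ (the $a^0$-contributions cancelling). By $(\ref{Z})$, $Z^\tau=(1-\hat N)-a^\tau$, and since $1-\hat N$ is an $\FF$-martingale and $a^\tau$ is $\FF$-predictable increasing with $a^\tau_0=0$, uniqueness of the Doob--Meyer decomposition shows this is the decomposition of $Z^\tau$.

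It remains to identify $\hat N$ with $\E[N_\cdot\mid\F_\cdot]$. Since $N$ is a $\GG$-martingale and $\FF\subseteq\GG$, ${}^{o}N$ is an $\FF$-martingale and $\E[N_t\mid\F_t]={}^{o}(\I_{\{\tau\leq\cdot\}})_t-{}^{o}(\Lambda_{\cdot\wedge\tau})_t$. Using the explicit expression for $\Lambda_{\cdot\wedge\tau}$ from Proposition \ref{propcompensator}, the projection formulas of Proposition \ref{bremyorproj}, the predictable-projection identities ${}^{p}(g^i_{\cdot-})=p^i_{\cdot-}$ and ${}^{p}(u^i)=\upsilon^i$ ($d\Lambda^i$-a.e.), and ${}^{o}N^0=0$ (which holds because $T^0$ avoids all finite $\FF$ stopping times, so that $N^0$ has zero covariation with every $\FF$-martingale and hence vanishing $\FF$-optional projection), one obtains ${}^{o}(\Lambda_{\cdot\wedge\tau})=a^\tau$, whence $\E[N_\cdot\mid\F_\cdot]={}^{o}(\I_{\{\tau\leq\cdot\}})-a^\tau=\hat N$.

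The main obstacle is precisely this last identification, i.e.\ checking that the $\FF$-optional and $\FF$-predictable projections of the genuinely $\GG$-adapted ingredients $g^i_{\cdot-}$, $u^i$, $\Lambda^0_{\cdot\wedge T^0}$ and $N^0$ collapse to $p^i_{\cdot-}$, $\upsilon^i$, $a^0$ and $0$. This is where the immersion hypothesis \textbf{(H)} is used in an essential way, in the sharpened form $\E[\xi\mid\F_u]=\E[\xi\mid\F_s]$ for every bounded $\G_s$-measurable $\xi$ and every $u\geq s$, together with Lemma \ref{pis} and Proposition \ref{bremyorproj}; the identity ${}^{p}(u^i)=\upsilon^i$ in particular requires relating the $\GG$-bracket $\langle g^i,N^i\rangle$ to the $\FF$-bracket $\langle p^i,N^i\rangle$ through the projection. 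Secondary, more routine care is needed for the interchange of the series $\sum_{i\geq1}$ with the projections (monotone/dominated convergence, using $\sum_i p^i\leq 1$) and for the fact that $\upsilon^i$ is only defined $d\Lambda^i$-almost everywhere.
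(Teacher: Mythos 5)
Your proposal is correct and follows essentially the same route as the paper: project the decomposition (\ref{decI}) onto $\FF$ to get $A^\tau$ and hence (\ref{Z}), then compute $a^\tau$ by the same term-by-term case analysis (totally inaccessible vs.\ predictable $T^i$) used in Proposition \ref{propcompensator}, and conclude by Doob--Meyer uniqueness. Where you add value is in making explicit the verification that $\hat N = {}^{o}(N)$, which the paper dismisses as holding ``by definition'': this really does require checking ${}^{o}(\Lambda_{\cdot\wedge\tau}) = a^\tau$ under \textbf{(H)}, and your observation that the $\GG$-adapted objects $g^i_{\cdot-}$, $u^i$, $N^0$ must collapse to $p^i_{\cdot-}$, $\upsilon^i$, $0$ correctly isolates the substance of that claim. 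Your justification that ${}^{o}(N^0)=0$ via orthogonality to $\FF$-martingales is a bit compressed (it ultimately rests on $[N^0,X]=0$ since $N^0$ is purely discontinuous with its single jump at $T^0$ avoiding the jump times of any $\FF$-martingale $X$, together with $\E[({}^{o}N^0_t)^2]=\E[{}^{o}N^0_t\,N^0_t]$); a more direct route is to compute ${}^{o}(\I_{\{T^0\leq\cdot\}})_t=a^0_t={}^{o}(\Lambda^0_{\cdot\wedge T^0})_t$ under \textbf{(H)}, which parallels the argument you give for $\tau$ itself.
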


\proof  $A^\tau_t=\sum_{i\geq 1} p^i_{T^i}\I_{\{T^i\leq t\}}+a^0_t$ is easily computed as the optional projection of $\I_{\{\tau\leq \cdot\}}$ expressed as in (\ref{decI}). Hence we obtain the formula (\ref{Z}) from  $Z^\tau=1-A^\tau$.

By definition $a^\tau$ is the $\FF$ compensator of the process $A^\tau$. Its expression can be found using exactly the same arguments that the ones in the proof of the Proposition \ref{propcompensator}, with $p^i$ instead of $g^i$ and $\upsilon^i$ instead of $u^i$. Therefore, by definition of $A^\tau$ and $a^\tau$, we have that $\hat N=A^\tau-a^\tau=\;^o(N)$ is a martingale.  \finproof

Notice that when the times $(T^i)_{i\geq 1}$ are totally inaccessible, then the processes $\Lambda^i$ and $a^0$ are continuous, the process $a^\tau$ is indeed continuous as required by the property of $\tau$ being totally inaccessible.
Also, the above Proposition makes it clear that when $\tau$ does not avoid all $\FF$ stopping times $Z^\tau$ is discontinuous and  $A^\tau$ and $a^\tau$ may differ. More exactly:
\begin{cor} The dual optional and predictable projections of $\tau$ coincide, that is $A^\tau=a^\tau$ if and only if $\tau$ avoids all totally inaccessible stopping times.
\end{cor}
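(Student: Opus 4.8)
The plan is to recast the statement as: $A^\tau$ is predictable if and only if $\tau$ avoids every ($\FF$-)totally inaccessible stopping time, and then to read predictability of $A^\tau$ off its jump structure via Lemma \ref{ja}. The recasting is routine: $a^\tau$ is by construction a predictable increasing process, so $A^\tau=a^\tau$ forces $A^\tau$ to be predictable; conversely, since optional and predictable projections coincide on predictable integrands, $a^\tau$ is also the $\FF$ dual predictable projection of $A^\tau$, and this equals $A^\tau$ as soon as $A^\tau$ is predictable.

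One direction is immediate. If $A^\tau$ is predictable then, a c\`adl\`ag predictable increasing process having no jump at a totally inaccessible stopping time, $\Delta A^\tau_T=0$ $a.s.$ for every $\FF$ totally inaccessible $T$, and Lemma \ref{ja} then gives $\P(\tau=T)=0$; so $\tau$ avoids all totally inaccessible stopping times.

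For the converse, suppose $\tau$ avoids all $\FF$ totally inaccessible stopping times. By Lemma \ref{ja}, $A^\tau$ charges no such time, so every jump time of $A^\tau$ — and hence of $a^\tau$ — is accessible. Consequently the $\FF$-martingale $\hat N=A^\tau-a^\tau$ of Proposition \ref{Azema}, being of finite variation and therefore purely discontinuous, has all of its jumps located at predictable stopping times. At such a predictable time $S$, $\Delta\hat N_S=\Delta A^\tau_S-\E[\Delta A^\tau_S\mid\F_{S-}]$. Hence $\hat N\equiv0$, i.e. $A^\tau=a^\tau$, will follow once we know that for every predictable $S$ charged by $A^\tau$ the jump $\Delta A^\tau_S=\P(\tau=S\mid\F_S)$ is $\F_{S-}$-measurable — equivalently, in the notation of Proposition \ref{gendecomp}, that the martingale $p^i_\cdot=\P(\tau=T^i\mid\F_\cdot)$ does not jump at the predictable time $T^i$ (note that for such a $T^i$ one has $\Lambda^i_{\cdot\wedge T^i}=\I_{\{T^i\le\cdot\}}$ and $\upsilon^i\equiv0$, so the only possible discrepancy between the formulas of Proposition \ref{Azema} for $A^\tau$ and $a^\tau$ is the term $(p^i_{T^i}-p^i_{T^i-})\I_{\{T^i\le\cdot\}}$).

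Establishing that $\F_{S-}$-measurability is the main obstacle: it is the one place where the structure of the reference filtration $\FF$ is used rather than just the general theory of processes. It holds, for instance, whenever $\FF$ is quasi left-continuous, since then $\F_S=\F_{S-}$ at every predictable stopping time $S$, so $\Delta A^\tau_S=\P(\tau=S\mid\F_S)=\P(\tau=S\mid\F_{S-})$ and the discrepancy vanishes. With this in hand all jumps of $\hat N$ are zero, hence $\hat N\equiv0$ and $A^\tau=a^\tau$, which together with the preceding paragraph completes the equivalence.
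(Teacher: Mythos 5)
Your proposal is correct and, as far as the paper's own (unstated) argument goes, takes exactly the intended route: one reads $A^\tau=a^\tau$ as ``$A^\tau$ is predictable,'' then uses Lemma \ref{ja} together with the explicit formulas of Proposition \ref{Azema} to translate predictability into a condition on the jump times $T^i$. Your forward direction is complete and unconditional: a predictable increasing process cannot charge a totally inaccessible time, and Lemma \ref{ja} converts $\Delta A^\tau_T=0$ into $\P(\tau=T)=0$. That part requires no more than the general theory and matches what the paper implicitly invokes.

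What is valuable — and where you go beyond the paper — is your treatment of the converse. You correctly observe from (\ref{Nhat}) (or by direct comparison of the two formulas in Proposition \ref{Azema}) that once the totally inaccessible $T^i$ are removed, the residual obstruction to $A^\tau=a^\tau$ is the sum of terms $\Delta p^i_{T^i}\I_{\{T^i\le\cdot\}}$ over the \emph{predictable} $T^i$, i.e.\ the requirement that $\P(\tau=T^i\mid\F_{T^i})$ be $\F_{T^i-}$-measurable. This is not a consequence of the immersion hypothesis (\textbf{H}) or of Lemma \ref{pis} (which only stops $p^i$ at $T^i$, it does not kill $\Delta p^i_{T^i}$), and in fact it can fail: take $\FF$ generated by a single Bernoulli variable $b$ revealed at time $1$, an independent uniform $\Theta$, and $\tau=1$ on $\{b=1,\Theta<1/2\}$, $\tau=2$ on $\{b=1,\Theta\ge1/2\}$, $\tau=3$ on $\{b=0\}$. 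Here (\textbf{H}) holds, every finite $\FF$ stopping time is predictable so $\tau$ trivially avoids all totally inaccessible ones, yet $\Delta A^\tau_1=\tfrac12\I_{\{b=1\}}$ is not $\F_{1-}$-measurable and $A^\tau\ne a^\tau$. So the converse of the corollary, as stated, genuinely requires an extra hypothesis such as quasi left-continuity of $\FF$ (under which $\F_{S}=\F_{S-}$ at predictable $S$), and your flagging of this is not a shortcoming of your proof but a real gap in the paper's statement, which presents the corollary as an immediate consequence of Proposition \ref{Azema} without noticing the term $\Delta p^i_{T^i}$ that its own formula (\ref{Nhat}) exhibits.
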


Also, using (\ref{calccomp}) we can write the intensity $\lambda$ in (\ref{intensity}) using $\FF$ adapted processes, as: 
\[
\lambda_t=\sum_{i=1}^n  \left(\frac{p^i_{t-}+\upsilon^i_s}{Z^\tau_s}\right )\lambda^i_{t}\I_{\{T^i\geq t\}}+\lambda^0_t.
\]
Notice that when $\tau$ has an intensity (as it is classical in the reduced-form approach) the times  $T^i$, $i\geq 1$ can also be identified as times of jumps in the intensity of $\tau$, which is more immediate than seeing them as times of jumps of the process $A^\tau$ which is often not modeled explicitly.

\section{A general construction}\label{sec::randomizedst}

Starting with an increasing  c\`adl\`ag process $(A_t)_{t\geq 0}$ with  $A_0=0$ $a.s.$ we construct of a family of random times  $(T^i)_{i\in \mathbb{N}}$  and an $\mathbb{N}$-valued random variable $S$ such that the stopping time $T^S$ has its Az\'ema supermartingale equal to the given process $A$ (Theorem \ref{genconstr}). 

\begin{ex} Let $(T^i)_{i\in E}$, $E\subset  \mathbb{N}$  be a family of $\FF$ stopping times and 
 $S$ an $E$ valued random variable independent from $\F_\infty$. Let the default time be:
\[
\tau=T^S. 
\]
Then, $\tau$ is a $\GG$ stopping time such that $\FF \hookrightarrow \GG$. 
\end{ex}

The above example can be generalized as follows:

\begin{thm}\label{thmrand1}
Let $(T^i)_{i\in E}$, $E\subset \mathbb{N}$ be a family of $\FF$ stopping times and $S$ an $E$ valued random variable with $p^i_t:=\P(S=i|\F_t)$.  We introduce the random time:
\[
\tau=T^S. 
\]
Then, $\FF \hookrightarrow \GG$ if and only if  $p^i_t=p^i_{t\wedge T^i}$, for all $i \in E$. In this case:
\begin{equation}\label{randomZ}
\P(\tau\leq t|\F_t):=\sum_{i\geq1}  p^i_{T^i}\I_{\{T^i\leq t\}}
\end{equation}
\end{thm}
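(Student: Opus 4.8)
The plan is to combine the known characterisation (\ref{H}) of the immersion property with a direct computation of the $\FF$-conditional law of $\tau=T^S$. Since $S$ is $E$-valued, the event $\{\tau\le s\}$ is the disjoint union $\bigcup_{i\in E}\big(\{S=i\}\cap\{T^i\le s\}\big)$, and each $\{T^i\le s\}$ belongs to $\F_s$ because $T^i$ is an $\FF$-stopping time; hence for every $t\ge s$
\begin{equation*}
\P(\tau\le s\mid\F_t)=\sum_{i\in E}\I_{\{T^i\le s\}}\,\E[\I_{\{S=i\}}\mid\F_t]=\sum_{i\in E}\I_{\{T^i\le s\}}\,p^i_t ,
\end{equation*}
and, by the same reasoning, $\P(\tau\le s\mid\F_\infty)=\sum_{i\in E}\I_{\{T^i\le s\}}\,p^i_\infty$ with $p^i_\infty:=\E[\I_{\{S=i\}}\mid\F_\infty]$. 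By (\ref{H}), the immersion $\FF\hookrightarrow\GG$ is therefore equivalent to
\begin{equation*}
\sum_{i\in E}\I_{\{T^i\le s\}}\,(p^i_t-p^i_\infty)=0\qquad\text{a.s., for all }s\le t .
\end{equation*}

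For the implication $\Leftarrow$ I would assume that $p^i_t=p^i_{t\wedge T^i}$ for all $t$ and every $i$ (equivalently, for the bounded martingale $p^i$, that $p^i_{T^i}=p^i_\infty$ a.s.). On $\{T^i\le s\}$ with $s\le t$ one then has $p^i_t=p^i_{T^i}=p^i_\infty$, so every summand of the last display vanishes; this gives (\ref{H}) and hence immersion. Taking $s=t$ in the first display and using once more $p^i_t\I_{\{T^i\le t\}}=p^i_{T^i}\I_{\{T^i\le t\}}$ produces the announced formula (\ref{randomZ}).

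For the implication $\Rightarrow$ I would fix $t$ and read the relation $\sum_{i\in E}(p^i_t-p^i_\infty)\I_{\{T^i\le s\}}=0$, valid for all $s\in[0,t]$, as the statement that the right-continuous step function $s\mapsto\sum_{i\in E}(p^i_t-p^i_\infty)\I_{\{T^i\le s\}}$ is identically zero on $[0,t]$. All of its jumps must then vanish; since the graphs of the $T^i$ are a.s. pairwise disjoint, the jump at $s=T^i$ is exactly $p^i_t-p^i_\infty$, so $(p^i_t-p^i_\infty)\I_{\{T^i\le t\}}=0$ a.s. Letting $t$ run through the rationals and using the c\`adl\`ag regularity of $p^i$, this upgrades to: almost surely $p^i_u=p^i_\infty$ for all $u\ge T^i$, i.e. $p^i_u=p^i_{u\wedge T^i}$ for every $u$.

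The step I expect to be the main obstacle is this last ``disentangling'': extracting one identity per index from the single relation $\sum_i(p^i_t-p^i_\infty)\I_{\{T^i\le s\}}=0$ relies on the jump of the step function at any instant being carried by a unique $T^i$, i.e. on $\P(T^i=T^j<\infty)=0$ for $i\ne j$ (cf. (\ref{prop1})); without such a hypothesis one can only conclude that $\sum_{j:\,T^j=T^i}(p^j_t-p^j_\infty)=0$. It is also worth being careful that the whole range $s\le t$ in (\ref{H}) is genuinely needed (the single equation at $s=t$ does not suffice), and that $S$ is taken $\F_\infty$-measurable so that $p^i_\infty=\I_{\{S=i\}}$ — though only $p^i_\infty=\E[\I_{\{S=i\}}\mid\F_\infty]$ is actually used, so nothing in the argument changes if one drops that.
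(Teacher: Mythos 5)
Your proposal is correct, and the sufficiency direction ($\Leftarrow$) is essentially the same computation as in the paper: decompose $\{\tau\le s\}$ into the disjoint events $\{S=i\}\cap\{T^i\le s\}$, condition on $\F_t$, and verify the characterization (\ref{H}). The necessity direction ($\Rightarrow$), however, takes a genuinely different route. The paper disposes of it in one line by invoking the argument of Lemma \ref{pis}: under immersion, $\E[G\mid\F_\infty]=\E[G\mid\F_T]$ for any $\FF$-stopping time $T$ and $\G_T$-measurable $G$, so writing $p^i$ as the $\FF$-optional projection of $g^i_t:=\P(S=i\mid\G_t)$ (which is stopped at $T^i$) gives $p^i_\infty=p^i_{T^i}$ directly via Proposition \ref{bremyorproj}. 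You instead fix $t$, regard $s\mapsto\sum_i(p^i_t-p^i_\infty)\I_{\{T^i\le s\}}$ as a c\`adl\`ag function vanishing on $[0,t]$, and read off the individual identities from the vanishing of its jumps. Both approaches work, and the trade-off is instructive: the paper's argument is shorter but leans on the conditional-independence machinery and on the fact that $\{S=i\}$ is $\G_{T^i}$-measurable, whereas yours is elementary. Crucially, both arguments need the graphs of the $T^i$ to be pairwise disjoint ($\P(T^i=T^j<\infty)=0$ for $i\ne j$): the paper needs it so that $\{S=i\}\in\G_{T^i}$ (observing $\tau$ up to $T^i$ must determine whether $S=i$), you need it to isolate the jump at $T^i$. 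You correctly flag that this hypothesis is absent from the theorem statement yet indispensable — without it one can construct counterexamples (e.g.\ $T^1=T^2$ deterministic and $S$ an $\F_\infty$-measurable coin that is not $\F_{T^1}$-measurable, giving $\tau$ an $\FF$-stopping time while $p^1_t\ne p^1_{t\wedge T^1}$). This is a useful observation about an implicit hypothesis; it is also consistent with the explicit assumption $\P(T^i=T^j)=0$ made in Theorem \ref{genconstr}. Two minor cosmetic points: the step function may have infinitely many jumps when $E$ is infinite (though they are absolutely summable since $\sum_i p^i\equiv 1$, so the disentangling is still valid), and to pass from ``a.s.\ for each fixed $s$'' to ``a.s.\ for all $s$ simultaneously'' you should first restrict to rational $s$ and then use right-continuity in $s$, just as you do in the variable $t$.
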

\proof
If $\FF \hookrightarrow \GG$  then the properties $p^i_t=p^i_{t\wedge T^i}$, for all $i \in E$ follow by the same arguments as in the proof of Lemma \ref{pis}. Now we show that this condition is sufficient for $\FF \hookrightarrow \GG$  to hold.

It is known that $\FF \hookrightarrow \GG$ if and only if for $s\geq t$, $\P(\tau\leq t|\F_s)=\P(\tau\leq t|\F_t)$ (see (\ref{H})).

Let us denote $E^i=\{S=i\}\bigcap\{T^i\leq t\}$ and notice that $E^i$ and $E^j$ are disjoint events for $i\neq j$. We obtain for $s\geq t$ (we use the Monotone Convergence Theorem for interchanging summation and expectation):
\begin{equation*}
\P(\tau\leq t|\F_s)= \P\left(\bigcup_{i\geq 0}E^i|\F_s\right)=\sum_{i\geq 0} \P\left(E^i|\F_s\right) =\sum_{i\geq 1}\I_{\{T^i\leq t\}} p^i_s.
\end{equation*} If  $p^i_t=p^i_{t\wedge T^i}, t\geq 0$ then expression (\ref{randomZ}) follows as well as the equality $\P(\tau\leq t|\F_s)=\P(\tau\leq t|\F_t)$.
\finproof

It is well known that is possible to associate with a given increasing process a random time (see for instance \cite{elliotjeanbyor}). Suppose that our probability space supports a random variable $\Theta$ uniform on $[0,1]$ which is independent from the sigma field $\F_\infty$. Assume we are given an $\FF$ adapted, increasing c\`adl\`ag process $(A_t)_{t\geq 0}$ with  $A_0=0$ $a.s.$ and $A_\infty=1$ $a.s.$. Then, 
\begin{equation}\label{constr1}
\tau:=\inf\{t | A_t\geq \Theta\}.
\end{equation}
satisfies $\P(\tau\leq t|\F_t)=A_t.$

Below we show an alternative construction of $\tau$  which emphasizes the role of sequences of $\FF$ stopping times.

\begin{thm}\label{genconstr} Assume we are given an $\FF$ adapted, increasing c\`adl\`ag process $(A_t)_{t\geq 0}$ with  $A_0=0$ $a.s.$ and $A_\infty=1$ $a.s.$.
Let $(T^i)_{i\geq 1}$ be any sequence of stopping times exhausting the jumps of $A$ such that $\P(T^i=T^j)=0$, for $i\neq j$. 
Denote:
\begin{align*}
p^i_t&:=\E[\Delta A_{T^i}|\F_t]\quad \forall i\geq 1,\\
p^0_t&:=\E[A^c_\infty|\F_t]
\end{align*} where $(A^c_t)$ is the continuous part of $(A_t)$.
Let us suppose that our probability space supports a random variable $\Theta$ uniform on $[0,1]$, independent from the sigma field $\F_\infty$. Denote 
\[a_t:=1-\exp\left\{-\int_0^t\frac{dA^c_s}{p^0_s-A^c_s}\right\}
\]
and:\[
T^0=\inf\{t | a_t>\Theta\}.
\]

Suppose furthermore that our probability space supports a second random variable $S:\Omega \to \mathbb{N}$ which satisfies:
\begin{equation}\label{condlawS}
\P(S=i|\F_\infty\vee\sigma(\theta))=p^i_0+\int_0^{T^0}\frac{dp^i_s}{1-a^0_s}.
\end{equation}

Then, the random time
\[
\tau=T^S. 
\]
satisfies  $\P(\tau\leq t|\F_t)=A_t.$ Furthermore, $\FF \hookrightarrow \GG$.
\end{thm}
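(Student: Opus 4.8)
The plan is to verify the two conclusions separately, building on the construction so that the sequence $(T^i)_{i\geq 1}$ together with $T^0$ and the selector $S$ reconstructs an $\FF$ random time with prescribed Az\'ema supermartingale. First I would check that $T^0$ as defined via $T^0=\inf\{t\mid a_t>\Theta\}$ is a well-behaved random time: since $\Theta$ is uniform on $[0,1]$ and independent of $\F_\infty$ and $a$ is $\FF$-adapted, increasing, c\`adl\`ag with $a_0=0$, the construction (\ref{constr1}) applies and gives $\P(T^0\leq t\mid \F_t)=a_t$; moreover $a$ is continuous (being an exponential of an absolutely continuous integral of the continuous part $A^c$), so $T^0$ avoids all $\FF$ stopping times, and by Theorem \ref{thmrand1}-type reasoning (or directly from (\ref{constr1})) one gets $\FF\hookrightarrow \tilde\FF$ where $\tilde\F_t=\F_t\vee\sigma(t\wedge T^0)$, hence $Z^0_t=1-a_t=1-a^0_t$. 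The role of the precise formula $a_t=1-\exp\{-\int_0^t dA^c_s/(p^0_s-A^c_s)\}$ is exactly to arrange that the \emph{dual predictable projection} contribution of $T^0$ to $a^\tau$, after weighting by $1/Z^\tau_{s-}$ as in Theorem \ref{calccomp}, reproduces the continuous part $A^c$ of $A$; this is the Cox-type change of clock and I would record it as a lemma.

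Next I would compute $\P(\tau\leq t\mid\F_t)$ for $\tau=T^S$. Conditioning first on $\F_\infty\vee\sigma(\Theta)$, the event $\{\tau\leq t\}$ decomposes as $\bigcup_{i\geq 0}\{S=i\}\cap\{T^i\leq t\}$, these being disjoint. For $i\geq 1$ the $T^i$ are $\FF$ stopping times, so $\{T^i\leq t\}\in\F_t\subset\F_\infty$, while the conditional law (\ref{condlawS}) of $S$ given $\F_\infty\vee\sigma(\Theta)$ is $\F_\infty\vee\sigma(\Theta)$-measurable; taking $\F_\infty$-conditional expectations and using that $\Theta$ (hence $T^0$, hence the integral $\int_0^{T^0}dp^i_s/(1-a^0_s)$) has the right conditional law, one should get $\P(S=i\mid\F_\infty)=\E[\,p^i_0+\int_0^{T^0}dp^i_s/(1-a^0_s)\mid\F_\infty\,]$, and by the defining property of $T^0$ this conditional expectation equals $\E[\Delta A_{T^i}\mid\F_\infty]=p^i_\infty$ (here one integrates $dp^i_s$ against the conditional density $da^0_s/(1-a^0_s)$ of $T^0$ given $\F_\infty$ — exactly the content of (\ref{constr1})). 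Then $\P(\{S=i\}\cap\{T^i\leq t\}\mid\F_t)=\I_{\{T^i\leq t\}}\,p^i_t$ after a further conditioning step exploiting $p^i_t=p^i_{t\wedge T^i}$ (which one must first verify holds for these $p^i$, since $\Delta A_{T^i}=\Delta A_{T^i}\I_{\{T^i\leq t\}}$ is already $\F_{t}$-measurable on $\{T^i\leq t\}$ and the martingale $p^i$ is constant after $T^i$). Summing over $i\geq 1$ and handling the $i=0$ term via $\P(\{S=0\}\cap\{T^0\leq t\}\mid\F_t)=$ the continuous piece $A^c_t$ — this is where the formula for $a_t$ pays off — yields $\sum_{i\geq1}p^i_{T^i}\I_{\{T^i\leq t\}}+A^c_t$, which by the exhaustion of jumps of $A$ is precisely $A_t$.

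Finally, for $\FF\hookrightarrow\GG$ I would invoke the characterization (\ref{H}): it suffices to show $\P(\tau\leq t\mid\F_s)=\P(\tau\leq t\mid\F_t)$ for $s\geq t$. Repeating the decomposition above but conditioning on $\F_s$ rather than $\F_t$, the term $\{S=i\}\cap\{T^i\leq t\}$ for $i\geq1$ contributes $\I_{\{T^i\leq t\}}p^i_s=\I_{\{T^i\leq t\}}p^i_{t\wedge T^i}=\I_{\{T^i\leq t\}}p^i_t$ using $p^i_s=p^i_{s\wedge T^i}$, which on $\{T^i\leq t\}$ equals $p^i_t$; and the $i=0$ term behaves the same way because $T^0$ was built from $\Theta\perp\F_\infty$ so that $\tilde\FF$ (and a fortiori its relevant sub-$\sigma$-fields) is immersed. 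Hence the $\F_s$-conditional probability does not depend on $s\geq t$, giving the immersion. The main obstacle I anticipate is the bookkeeping in the second paragraph: carefully justifying the interchange of conditioning, summation, and the substitution $\int_0^{T^0}dp^i_s/(1-a^0_s)$, i.e.\ showing that integrating the semimartingale $p^i$ against the conditional distribution of $T^0$ given $\F_\infty$ really recovers $p^i_\infty-p^i_0$ and that the continuous part is correctly accounted for by the particular choice of $a_t$ — this is the heart of why the stated formula for $a_t$, rather than some other increasing function, is forced.
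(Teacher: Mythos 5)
Your overall strategy matches the paper's: construct $T^0$ by the Cox recipe from the continuous process $a$, verify the immersion of $\FF$ into the filtration carrying $T^0$, decompose $\{\tau\le t\}$ over $\{S=i\}\cap\{T^i\le t\}$, and push the conditional law (\ref{condlawS}) of $S$ down to an $\F_t$-conditional probability. The one organizational difference is the chain of conditioning: you condition directly on $\F_\infty\vee\sigma(\Theta)$, then on $\F_\infty$ (with a Fubini argument giving $\P(S=i\mid\F_\infty)=p^i_\infty$), then on $\F_t$; the paper instead introduces the intermediate progressively enlarged filtration $\G^0_t=\F_t\vee\sigma(T^0\wedge t)$, computes the $\GG^0$-martingales $q^i_t=\P(S=i\mid\G^0_t)=p^i_0+\int_0^{T^0\wedge t}dp^i_s/(1-a_s)$, invokes Theorem \ref{thmrand1} with $\GG^0$ as base filtration to get the immersion $\GG^0\hookrightarrow\GG$ (since $q^i_t=q^i_{t\wedge T^i}$), and then projects onto $\FF$ via Proposition \ref{bremyorproj}. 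The intermediate filtration buys a cleaner immersion argument (transitivity $\FF\hookrightarrow\GG^0\hookrightarrow\GG$) and lets one reuse the earlier theorem wholesale; your direct route is equivalent but requires the Fubini/tower bookkeeping that you yourself flag as the main obstacle.

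The genuine gap is the one you name at the end: you never actually verify that the $i=0$ term recovers the continuous part, i.e.\ that $\P(\{S=0\}\cap\{T^0\le t\}\mid\F_t)=A^c_t$. This is not routine bookkeeping; it is precisely where the exponential form of $a_t$ is used, and it is the crux of the theorem. The paper carries it out explicitly: writing $\tilde q_t=p^0_0+\int_0^t dp^0_s/(1-a_s)$, the $i=0$ contribution is $\alpha_t:=\int_0^t\tilde q_s\,da_s$, and an integration by parts gives $\alpha_t=p^0_t-\tilde q_t(1-a_t)$, after which one checks that $\alpha$ solves the ODE $d\alpha_t=\frac{p^0_t-\alpha_t}{p^0_t-A^c_t}\,dA^c_t$ with $\alpha_0=0$, whose solution is $\alpha=A^c$. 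Without this computation the proof is incomplete; a formula of the form $a_t=1-\exp\{-\int_0^t dA^c_s/(p^0_s-A^c_s)\}$ rather than, say, $a_t=A^c_t/p^0_t$ or any other increasing normalization cannot be justified by general principles, only by solving for the $a$ that makes $\alpha=A^c$. A secondary omission (shared with the paper's terse treatment but still worth recording) is the well-definedness of the conditional law in (\ref{condlawS}): one needs $q^i_t\ge 0$ and $\sum_i q^i_t=1$, and also the analysis of the explosion time $\nu=\inf\{t:A^c_t=p^0_t\}$ to ensure $a$ is well defined and $a_\infty=1$; these are handled in the paper's opening paragraph and are needed for your Fubini step $\int_0^\infty\frac{dp^i_s}{1-a_s}(1-a_s)=p^i_\infty-p^i_0$ to go through.
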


\proof First let us have a look to the integral appearing in the definition of the process $(a_t)$. This is an increasing process which converges to infinity. Its explosion time is $\nu=\inf\{t\;|\;A^c_t=p^0_t\}$ which is an $\FF$ stopping time. On the stochastic interval $[\nu,\infty)$ the martingale $p^0_t$ and $A^c_t$ stay constant. We can therefore also write:
\[a_t:=1-\exp\left\{-\int_0^{t\wedge\nu}\frac{dA^c_s}{p^0_s-A^c_s}\right\}.\] 
Remark that $(a_t)$ is a continuous increasing process with $a^0=0$ and $a_\infty=1$.

Let us now introduce the following filtration $\GG^0=(\G^0_t)_{t\geq 0}$:
\[
\G^0_t :=\F_{t}\vee\sigma(T^0 \wedge t). 
\]
It is easy to check that $\FF \hookrightarrow \GG^0$ by the property (\ref{H}) since $\Theta$ is independent from $\F_\infty$ hence:
\begin{equation}\label{a}
\P(T^0\leq t|\F_t)=\P(T^0\leq t|\F_\infty)=a_t.
\end{equation}
 Therefore the $\FF$ martingales  $(p^i_t)$ are also $\GG^0$ martingales. Since $\G^0_\infty=\F_\infty\vee \sigma(\theta)$, we obtain that:
\[q^i_t:=\P(S=i|\G^0_t)=\E[\P(S=i|\G^0_\infty)|\G^0_t]=p^i_0+\int_0^{T^0\wedge t}\frac{dp^i_s}{1-a_s}.\]
The integrals above are well defined since $\P(a^0_t=1)=0$ fot $t$ in the stochastic interval $[0,T^0]$. Also, since by construction $\sum p^i_t=1$, $\forall t$ it follows that $\sum q^i_t=\sum p^i_0=1$. Finally, $q^i_t\geq 0$ and we have thus checked that the conditional law in (\ref{condlawS}) is well defined.

As in the proof of Theorem \ref{thmrand1} (using now the filtration $ \GG^0$ instead of $\FF$), let us denote $E^i=\{S=i\}\bigcap\{T^i\leq t\}$ hence
\begin{equation}\label{interm}
\P(\tau\leq t|\G^0_s)= \P\left(\bigcup_{i\geq 0}E^i|\G^0_s\right) =\sum_{i\geq 1}\I_{\{T^i\leq t\}} q^i_s.
\end{equation} 
Notice that since $\FF \hookrightarrow \GG^0$, we can use Proposition \ref{bremyorproj} (i) and the fact that from (\ref{a}) $a_t=^o(\I_{T^0\leq t})$ and is continuous in oder to get:
\[
\;^o(q^i)=p^i_0+\;^o\left(\int_0^ \cdot \frac{\I_{T^0\geq s}}{1-a_s}dp^i_s\right)=p^i_0+\int_0^ \cdot \frac{\;^o(\I_{T^0\geq \cdot})_s}{1-a_s}dp^i_s=p^i.
\]
We denote $\tilde q_t=p^0_0+\int_0^{ t}\frac{dp^0_s}{1-a_s}$. Therefore, projecting onto the filtration $\FF$ the equality (\ref{interm}), leads us  to:
\begin{align*}
\P(\tau\leq t|\F_t)&=\sum_{i\geq 1}\;^o(q^i)_{t} \I_{\{T^i\leq t\}} +\;^o\left(\tilde q_{T^0}\I_{\{T^0\leq \cdot\}}\right)_t = \sum_{i\geq 1}p^i_t\I_{\{T^i\leq t\}}+\;^o\left(\int_0^\cdot  \tilde q_{s}d\I_{\{T^0\leq s\}}\right)_t\\
&=\sum_{i\geq 1}\Delta A_{T^i}\I_{\{T^i\leq t\}}+\int_0^t  \tilde q_{s}da_s.
\end{align*}
It is remains to check that $A^c_t=\int_0^t  \tilde q_{s}da_s$. Let us denote $\alpha_t= \int_0^t  \tilde q_{s}da_s$. After integration by parts one gets that   $\alpha_t= p^0_t-   \tilde q_{t}(1-a_t)$. Using this  and the expression of the process $a$ we have that $\alpha$ solves:
\[
d\alpha_t=\frac{p^0_t-\alpha_t}{p^0_t-A^c_t}dA^c_t
\]with $\alpha_0=A^c_0=0$ which has as solution $\alpha=A^c$.
\finproof

\begin{rem}

\begin{itemize}
\item[(i)] Suppose that the process $(A_t)$ given in the above theorem is continuous. Then, $T^i=\infty$ for $i\geq 1$ and $p^0_t=1$. It follows that $a\equiv A$, hence we obtain the classical construction we have presented in (\ref{constr1}).
\item[(ii)] The time $\tau$ constructed above is totally inaccessible if and only if the jump times of $A$ are totally inaccessible.
\end{itemize}
\end{rem}

\section{A decomposition of the default event risk premium} \label{sec::premium}

We recall our probability space is  $(\Omega, \F,\FF,\P)$ and $\tau$ is a random time which is not an $\FF$ stopping time and $\GG$ is the progressively enlarged filtration which makes $\tau$ a stopping time. We also continue to assume the following:
\begin{itemize}
\item[\textbf{ (H)}] $\FF\hookrightarrow \GG$, that is the filtration $\FF$ is immersed in $\GG$;
\item[\textbf{(TI)}] $\tau$ is a $\GG$  stopping time and $(T^i)_{i\in \mathbb{N}}$ is a decomposing sequence of $\tau$, i.e.,
\begin{equation*}
\tau=\sum_{i\geq 0} T^i \I_{\{T^i=\tau\}}.
\end{equation*}where $(T^i)_{i\geq 1}$ are $\FF$ stopping times, such that $\P(T^i=T^j <\infty) =0, \quad i\neq j$ and  $T^0$   avoids all finite $\FF$  stopping times.
\end{itemize}

The aim of this section is to provide an analysis of the risk premiums that  form in a financial market in anticipation of  losses which occur at the default event $\tau$. From this perspective, it can be added in the assumption \textbf{(TI)} that the time $\tau$ is totally inaccessible (this assumption will be added later on).  The defaultable claims  we are going to analyze are $\G_T$ measurable random variables ($T>0$ constant) that  have the specific form:
\begin{equation}\label{defpayoff2}
X=P\I_{\{\tau> T\}}+C_\tau\I_{\{\tau\leq T\}},
\end{equation} where we assume that $P$ is a positive square integrable,  $\F_T$-measurable random variable which represents a single payment which occurs at time $T$ and $(C_t)$ is a positive bounded,  $\FF$-adapted process.  $P$ stands for the promised payment, while the process $C$ models the recovery in case of default.

We do not assume the recovery process $C$ to be predictable, as it is common  in the usual reduced-form setting, since we want to emphasize possible drops in the values of the collateral when  macro-economic shocks arrive, which we believe are important phenomena in the credit markets, and should bear risk premiums.
In order to obtain explicit formulas for the risk premiums attached to the jumps of the price process at the macro-economic shock times $(T^i)_{i\geq 1}$, the following projection result will play an important role:
\begin{lem}[\cite{LeJan}]\label{lejan}
Let $T$ be an $\FF$ stopping time and $X$ a square integrable random variable which is $\F_T$ measurable. Then, there exists an $\FF$ predictable process $(x_t)$ such that $\E[X|\F_{T-}]=x_T$. Denote  $\xi=X-x_T$. The process $(\xi\I_{\{T\leq t\}},t\geq 0)$ is a square integrable martingale which is orthogonal to any square integrable martingale $M$ which has the property that $M_T$ is $\F_{T-}$ measurable.
\end{lem}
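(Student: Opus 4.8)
The plan is to base the whole argument on two facts: the structural description of $\F_{T-}$-measurable random variables, and the elementary observation that a ``single-jump'' process $\eta\,\I_{\{T\le t\}}$, with $\eta$ integrable and $\F_T$-measurable, is a martingale if and only if $\E[\eta\mid\F_{T-}]=0$.

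For the existence of the predictable process $(x_t)$: the random variable $\E[X\mid\F_{T-}]$ is $\F_{T-}$-measurable by definition, and a random variable is $\F_{T-}$-measurable exactly when it coincides on $\{T<\infty\}$ with $x_T$ for some $\FF$-predictable process $x$. I would recall why: $\F_{T-}$ is generated by $\F_0$ together with the sets $A\cap\{t<T\}$, $A\in\F_t$, $t\ge 0$, and these are precisely the values at $T$ of the predictable indicators $\I_A\,\I_{(t,\infty)}$; a monotone-class argument then promotes this to all $\F_{T-}$-measurable variables. I would stress that one cannot simply take $x$ to be the left limit of the martingale $\E[X\mid\F_\cdot]$: when $T$ is not predictable the identity $\E[X\mid\F_{T-}]=\bigl(\E[X\mid\F_\cdot]\bigr)_{T-}$ fails, so the structural statement is genuinely needed. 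Setting $\xi:=X-x_T$ we then have $\E[\xi\mid\F_{T-}]=0$, and $\xi\in L^2$ since conditioning is an $L^2$-contraction.

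Next I would check that $U_t:=\xi\,\I_{\{T\le t\}}$ is a square-integrable martingale by proving $\E[\xi\mid\F_t]=\xi\,\I_{\{T\le t\}}$. On $\{T\le t\}$ both $X$ and $x_T$ are $\F_t$-measurable, so the two sides agree there; on $\{T>t\}$ it is enough to see that $\E[\xi\,\I_{A\cap\{t<T\}}]=0$ for every $A\in\F_t$, which holds because $A\cap\{t<T\}\in\F_{T-}$ and $\E[\xi\mid\F_{T-}]=0$. Square-integrability of $U$ is immediate from $\xi\in L^2$.

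For the orthogonality, let $M$ be a square-integrable martingale with $M_T$ being $\F_{T-}$-measurable. Since $U$ is a finite-variation pure-jump process whose only jump, of size $\xi$, occurs at $T$, one has $[U,M]_t=\xi\,\Delta M_T\,\I_{\{T\le t\}}$. Here $\Delta M_T=M_T-M_{T-}$ is $\F_{T-}$-measurable ($M_T$ by hypothesis, $M_{T-}$ always), so $\E[\xi\,\Delta M_T\mid\F_{T-}]=\Delta M_T\,\E[\xi\mid\F_{T-}]=0$, and $\xi\,\Delta M_T\in L^1$ as a product of two $L^2$ variables. Applying the single-jump criterion from the previous step with $\eta:=\xi\,\Delta M_T$, the process $[U,M]$ is itself a martingale, hence its compensator $\langle U,M\rangle$ vanishes; this is exactly the (strong) orthogonality of $U$ and $M$. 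I expect the only non-routine ingredient to be the structural characterization of $\F_{T-}$ in the first step — the remainder is the same two-line single-jump computation used twice.
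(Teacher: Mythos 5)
The paper does not prove this lemma; it is quoted directly from Le Jan's paper, so there is no in-text argument to compare against. Your proof is correct and self-contained: the structural characterization of $\F_{T-}$-measurable variables as $x_T$ for a predictable $x$ is exactly the right tool (and you rightly flag that one cannot shortcut via $(\E[X\mid\F_\cdot])_{T-}$ when $T$ is not predictable), the single-jump martingale criterion $\E[\eta\mid\F_{T-}]=0$ is applied cleanly both to establish that $\xi\I_{\{T\le t\}}$ is a martingale and, with $\eta=\xi\,\Delta M_T$, to show $[U,M]$ is a martingale so that $\langle U,M\rangle=0$. The key observation that $\F_{T-}$-measurability of $M_T$ forces $\Delta M_T$ to be $\F_{T-}$-measurable, together with the $L^1$ bound from $\xi,\Delta M_T\in L^2$, makes the orthogonality argument go through. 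The only point worth tightening, which does not affect correctness: for the martingale property you need $\E[\xi\I_{A\cap\{t<T\le s\}}]=0$ rather than $\E[\xi\I_{A\cap\{t<T\}}]=0$; this follows because both $A\cap\{t<T\}$ and $A\cap\{s<T\}$ lie in $\F_{T-}$ and both lie in $\{T<\infty\}$ after intersection, so the restriction $\E[\xi\mid\F_{T-}]=0$ on $\{T<\infty\}$ is all that is required and all that the construction of $x$ actually gives.
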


We therefore assume without loss of generality the following form for the recovery process:
\begin{itemize}
\item[\textbf{(R)}] There exist a sequence of predictable processes $(c^i)_{t\geq 0}$, $i\in \mathbb{N}^*$ and a sequence of random variables $(\kappa^i)_{i\in \mathbb{N}^*}$ such that  the recovery decomposes as:
\[C_t=\hat C_t-\sum_{i\geq 1}(c^i_{T^i}+\kappa^i)\I_{\{T^i\leq t\}},
\] where the process $(\hat C_t)_{t\geq 0}$ does not have discontinuities at the stopping times $T^i, i\in \mathbb{N}^*$ and the processes $(\kappa^i\I_{\{T^i\leq t\}})$ are martingales.
\end{itemize}

Also, the martingales $p^i$, $ i\geq 1$ admit the decomposition:
\begin{equation}\label{reprpi}
p^i_t=\int_0^t \upsilon^i_sdN^i_s + \phi^i\I_{\{T^i\leq t\}} + \hat m^i_t,
\end{equation}
where the processes $\upsilon^i$ are predictable with $\E[|\upsilon^i_{T^i}|]<\infty$; $\phi^i$ are random variables $\F_{T^i}$ measurable with $\E[\theta^i|\F_{T^i-}]=0$ and $(\hat m^i_t)$ is a martingale orthogonal to $N^i$.

Let us also denote $R_t=\int_0^tr_udu$, where $(r_t)$ is the locally risk-fee interest rate. We shall assume that $\P$ is a risk neutral measure. We recall that an arbitrage-free price of a defaultable claim is given by the following conditional expectation:
\[
S(X)_t:=e^{R_t}\E[Pe^{-R_T}\I_{\{\tau> T\}}+C_\tau e^{-R_\tau}\I_{\{\tau\leq T\}}|\G_t].
\]

Using the enlargement of filtration framework, pre-default prices can always be expressed in terms of an $\FF$-adapted process, via projections on the smaller filtration $\FF$ as (see \cite{elliotjeanbyor}, \cite{jenbrutk1}, \cite{bsjeanblanc}):
\[
\I_{\{\tau>t\}}S(X)_t=\I_{\{\tau>t\}}\tilde S(X)_t
\]
where $\tilde S(X)$ is $\FF$-adapted, given by:
\begin{equation}\label{predefprice}
\tilde S(X)_t:=\frac{e^{R_t}}{Z^\tau_t}\E[Pe^{-R_T}\I_{\{\tau> T\}}+C_\tau e^{-R_\tau}\I_{\{\tau\in (t, T]\}}|\F_t].
\end{equation}
The process $\tilde S(X)$ is always well defined on the stochastic interval $[0,\tau)$. We add the assumption that $Z^\tau>t, \forall t$ so that the pre-default price process is well defined $\forall t$. Therefore, we exclude the situation where $\tau$ is an $\FF$ stopping time.

We recall below a well known expression of the pre-default price process (see for instance\cite{elliotjeanbyor}, \cite{bsjeanblanc}, \cite{jenbrutk1}) which holds in a particular case of our framework: 
\begin{prop}[\cite{bsjeanblanc}] \label{propbsjeanblanc}Suppose that the process $C$ is predictable and the process $Z^\tau$ is continuous.  Then
 \begin{equation}\label{eqSimple2}
\tilde S(X)_t=\frac{e^{R_t}}{Z^\tau_t}\E\left[\int_t^TC_u e^{-(R_u+\Lambda_u)}d\Lambda_u+Pe^{-(R_T+\Lambda_T)}|\F_t\right],\quad t\geq 0.
 \end{equation}
\end{prop}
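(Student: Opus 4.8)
The plan is to start from the definition \eqref{predefprice} of the pre-default price and compute the conditional expectation term by term, under the two simplifying hypotheses that $C$ is predictable and $Z^\tau$ is continuous. The key point is that under \textbf{(H)} we have $Z^\tau = 1 - A^\tau = 1 - a^\tau$ (since continuity of $Z^\tau$ means $A^\tau = a^\tau$ is continuous and equals the $\GG$-compensator via Theorem \ref{calccomp}); moreover in this continuous, immersed case the $\GG$-compensator of $\I_{\{\tau\le t\}}$ is $\Lambda_{t\wedge\tau} = \int_0^{t\wedge\tau} \frac{da^\tau_s}{Z^\tau_{s-}}$, and since $Z^\tau$ is continuous and decreasing one recognizes $Z^\tau_t = e^{-\Lambda_t}$ on $\{t<\tau\}$, i.e. $da^\tau_t = Z^\tau_t\, d\Lambda_t = e^{-\Lambda_t} d\Lambda_t$. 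This identity is the computational engine of the proof.

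Next I would handle the two pieces of the payoff separately. For the promised payment, use the \textbf{(H)}-hypothesis in the form \eqref{H}: $\E[P e^{-R_T}\I_{\{\tau>T\}} \mid \F_t] = \E[P e^{-R_T}\, \P(\tau>T\mid \F_\infty) \mid \F_t]$, and since $Z^\tau_T = \P(\tau>T\mid\F_T) = \P(\tau>T\mid\F_\infty)$ under immersion (the last equality is exactly \eqref{H}), this equals $\E[P e^{-R_T} Z^\tau_T \mid \F_t] = \E[P e^{-(R_T+\Lambda_T)} \mid \F_t]$, giving the second term in \eqref{eqSimple2}. For the recovery term, write $\I_{\{\tau\in(t,T]\}}$ and condition first on $\F_\infty$: because $C$ is $\FF$-predictable (hence $C_\tau$ is, loosely, $\F_\infty$-measurable once one integrates against the law of $\tau$) one gets $\E[C_\tau e^{-R_\tau}\I_{\{t<\tau\le T\}}\mid\F_t] = \E[\int_t^T C_u e^{-R_u}\, dA^\tau_u \mid \F_t]$; here the predictability of $C$ is what lets one replace the random jump at $\tau$ by an integral against the dual predictable projection $a^\tau = A^\tau$. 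Substituting $dA^\tau_u = da^\tau_u = e^{-\Lambda_u}\, d\Lambda_u$ yields $\E[\int_t^T C_u e^{-(R_u+\Lambda_u)}\, d\Lambda_u \mid\F_t]$, which is the first term. Finally multiply through by $e^{R_t}/Z^\tau_t$ to recover \eqref{eqSimple2}.

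The main obstacle is the careful justification of the two reductions that use predictability and immersion, namely $\E[C_\tau e^{-R_\tau} f(\tau)\mid\F_t] = \E[\int C_u e^{-R_u} f(u)\, da^\tau_u\mid\F_t]$ for predictable integrands and $\P(\tau>T\mid\F_T)=\P(\tau>T\mid\F_\infty)$. Both are standard consequences of the theory of dual predictable projections together with \eqref{H}, but one must be slightly delicate about integrability (this is why $P$ is assumed square integrable and $C$ bounded) and about the fact that $C_\tau$ is not literally $\F_T$-measurable — one conditions on $\F_\infty$ first and uses that, for $\FF$-predictable $C$, the optional projection of $u\mapsto C_u\I_{\{\tau=u\}}$-type expressions collapses onto integration against $a^\tau$. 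Since $C$ being predictable rules out exactly the macroeconomic-shock jumps that Lemma \ref{lejan} and assumption \textbf{(R)} are designed to capture, this proposition is genuinely the degenerate case of the general formula to be derived later, and the proof is essentially a bookkeeping exercise once the identity $da^\tau_t = e^{-\Lambda_t}\,d\Lambda_t$ is in hand; accordingly I would keep it short and refer to \cite{bsjeanblanc} for the routine measure-theoretic details.
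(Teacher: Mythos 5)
The paper states Proposition~\ref{propbsjeanblanc} as a cited result from \cite{bsjeanblanc} (and \cite{elliotjeanbyor}, \cite{jenbrutk1}) and offers no proof of its own, so there is nothing internal to compare against. That said, your argument is correct and is the standard derivation: the two working identities are (a) the dual-predictable-projection formula $\E[H_\tau\I_{\{t<\tau\le T\}}\mid\F_t]=\E[\int_t^T H_u\,da^\tau_u\mid\F_t]$ for $\FF$-predictable bounded $H$ (here $H_u=C_u e^{-R_u}$), which is precisely where predictability of $C$ is used, and (b) the exponential relation $Z^\tau_t=e^{-\Lambda_t}$, hence $da^\tau_t=Z^\tau_t\,d\Lambda_t=e^{-\Lambda_t}\,d\Lambda_t$, which follows under \textbf{(H)} from $Z^\tau=1-a^\tau$, continuity, $Z^\tau_0=1$, and the assumed strict positivity of $Z^\tau$. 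The promised-payment term is slightly over-engineered: since $P$ is $\F_T$-measurable you can just tower through $\F_T$ and use $\P(\tau>T\mid\F_T)=Z^\tau_T$ directly, without invoking the full strength of \eqref{H}. Also, your parenthetical remark that $A^\tau=a^\tau$ ``equals the $\GG$-compensator via Theorem~\ref{calccomp}'' is loose — $a^\tau$ is the $\FF$ dual predictable projection, and the $\GG$-compensator is $\Lambda_{\cdot\wedge\tau}=\int_0^{\cdot\wedge\tau}da^\tau_s/Z^\tau_{s-}$ — but you correct this in the very next clause, so the slip is harmless. The claim that $Z^\tau_t=e^{-\Lambda_t}$ ``on $\{t<\tau\}$'' is likewise unnecessary: both sides are $\FF$-adapted, so the identity holds unconditionally on $[0,T]$ given $Z^\tau>0$. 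None of these affect the validity of the argument.
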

\begin{rem} It is known that when $\FF\hookrightarrow \GG$, $Z^\tau$ is continuous if and only if $\tau$ avoids the $\FF$ stopping times, i.e.,  in the condition \textbf{(TI)} all $T^i=\infty$ for $i\geq 1$.  
\end{rem}

We are now going to generalize the expression of the pre-default price  to our setting, synthesized in the assumptions \textbf{(H)}, \textbf{(TI)} and \textbf{(R)} and deduce an expression of the default event risk premium. We shall use the general pricing methodology developed in \cite{coculescunikeghbali}. 

To begin, we clarify what we understand by default event risk premium, by introducing the following definition. 

 \begin{defn}\label{def::premiums} Suppose that the pre-default price process $\tilde S(X)$ of the claim $X$ introduced in (\ref{defpayoff2}) has a Doob Meyer decomposition under the risk neutral measure $\P$:
 \begin{equation}\label{premium}
 \tilde S(X)_t= \tilde S(X)_0+ \int_0^t  \tilde S(X)_u d\nu(X)_u +M_t 
 \end{equation}
 where $(\nu(X)_t), t\geq 0$ is a finite variation, predictable process and $(M_t)$ a martingale $M_0=0$.  We call (cumulated) default risk premium the process $\pi(X)_t=\nu(X)_t-R_t$, $0\leq t\leq T$.

 \end{defn}
Intuitively, the default event risk premium represents  the additional net yield an investor can earn from a security as a compensation for the losses arriving at the default time. Indeed,  $\int_0^{\cdot\wedge\tau}   \tilde S(X)_u d\pi(X)_u$ represents the compensator of the jump (in practice a loss) that will occur in the price of the claim $X$ at the default time $\tau$. Hence, for totally inaccessible default times, the (cumulated) default event risk premium is an increasing, continuous process while for a predictable default time, this is always null.
\begin{ex} Suppose the process $Z^\tau$ is continuous and $C$ is predictable as in Proposition \ref{propbsjeanblanc}. Then from equation (\ref{eqSimple2}), it can be easily checked that:
\[
\pi(X)_t=\Lambda_t-\int_0^t \tilde C_u d\Lambda_u
\]
where $\tilde C= C/\tilde S(X)$. 
\end{ex}
Definition \ref{def::premiums} can be put in relation with  the notion of the instantaneous credit spread, i.e.,  the artificial discounting rate that one would need to apply to the promised payment, in excess to the risk-free rate, in order to make the price of the defaultable claim equal the expected discounted promised payment. The instantaneous credit spread (when it exists)   is an $\FF$ adapted process $(s_t)$ that satisfies the equality:
\[
\tilde S(X)_t=\E[Pe^{-\int_t^T (r_u+s_u)du)}|\F_t].
\]
After noticing that $\tilde S(X)_T=P$, we find that $S(X)$ decomposes as in  (\ref{premium}) with $\pi(X)_t=\int_0^t s_udu$. Therefore, the instantaneous credit spread exists only if the default risk premium $\pi(X)$ is absolutely continuous to Lebesgue measure, and in this case $\pi(X)$ can be interpreted as  the cumulated instantaneous spreads.

For pricing the defaultable claims, the following martingale will play a crucial role (as developed in \cite{coculescunikeghbali}):

\begin{defn}
We introduce the following exponential local martingale: 
\[D_t:=\mathcal E\left(\int_0^\cdot \frac{dm^\tau_s}{Z^\tau_{s-}}\right)_t=\mathcal E\left(-\int_0^\cdot \frac{d\hat N_s}{Z^\tau_{s-}}\right)_t,
\] 
where the martingale $\hat N$ was defined in the equality (\ref{Nhat}).
If $(D_t)_{0\leq t\leq T}$ is a square integrable martingale, we define the default-adjusted measure as:
\[
d\Q^\tau:=D_T \cdot d\P \quad \text{ on } \F_T.
\]
\end{defn}

\begin{rem}
It is useful to have in mind another equivalent expression of the local martingale $D$, namely: $$D_t=Z^\tau_te^{\Lambda_t}.$$
\end{rem}

We thus obtain the following  proposition:

\begin{prop}\label{price} Suppose that $T^i$, $i\geq 1$ are totally inaccessible stopping times. Furthermore, assume that  $(D_t)_{0\leq t\leq T}$ is a square integrable martingale (for instance $\E[e^{2\Lambda_T}]<\infty$). Then, the  pre-default price of the defaultable claims is given by:
\begin{equation}\label{pricegen}
 \tilde S_t(X)=e^{\tilde R_t}\E^{\Q^\tau}\left[\int_t^T e^{-\tilde R_{u}}C_{u-} d\Lambda_u -\sum_{i\geq 1}\int_t^T  e^{-\tilde R_{u}} h^i_u\frac{p^i_{u-}+\upsilon^i_u}{Z^\tau_{u-}}d\Lambda^i_{u\wedge T^i}  + Pe^{-\tilde R_T}|\F_t\right]\quad t<T,
 \end{equation}
 where: 
 \begin{align*}
 \tilde R_t& :=R_t+\Lambda_t,\\
 h^i_t&:=c^i_t+\frac{\tilde \kappa^i_t}{p^i_{t-}+\upsilon^i_t}
 \end{align*}
  and where, for $i\geq 1$, $\tilde \kappa^i$ is the predictable process which satisfies $\E[\kappa^i\phi^i|\F_{T^i-}]=\tilde \kappa^i_{T^i}$.

 In particular, if the recovery process $C$ does not have discontinuities at the stopping times $T^i, i\geq 1$, then the pre-default price of the defaultable claims is given by:
\begin{equation}\label{priceCpred}
 \tilde S_t(X)=e^{\tilde R_t}\E^{\Q^\tau}\left[\int_t^TC_{u} e^{-\tilde R_{u}}d\Lambda_u   + Pe^{-\tilde R_T}|\F_t\right]\quad t<T.
 \end{equation}

\end{prop}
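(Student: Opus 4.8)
The plan is to start from the risk-neutral pricing formula $S(X)_t = e^{R_t}\E[Pe^{-R_T}\I_{\{\tau>T\}} + C_\tau e^{-R_\tau}\I_{\{\tau\le T\}}\mid \G_t]$ and its pre-default $\FF$-projection $\tilde S(X)$ defined in (\ref{predefprice}), and then apply the change of measure to $\Q^\tau$. The first step is to recall from \cite{coculescunikeghbali} that, since $D_t = Z^\tau_t e^{\Lambda_t}$ is a square-integrable $\FF$-martingale, the density $D$ allows one to rewrite conditional expectations against $Z^\tau$-weighted integrands as $\Q^\tau$-expectations: concretely, for an $\FF$-adapted integrand $Y$ one has $\frac{e^{R_t}}{Z^\tau_t}\E[\int_t^T Y_u\,(\text{terms in } da^\tau, dR)\mid\F_t] = e^{\tilde R_t}\E^{\Q^\tau}[\cdots\mid\F_t]$, the factor $e^{\Lambda_t}/Z^\tau_t$ being absorbed by $D$. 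So the bulk of the work is purely on the $\P$-side: to express $\tilde S(X)$ as an $\FF$-conditional expectation of an integral against $da^\tau$ (plus the terminal term), before invoking the measure change.

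The key computational step is to handle the recovery part $\E[C_\tau e^{-R_\tau}\I_{\{\tau\in(t,T]\}}\mid\F_t]$. Using the decomposition (\ref{decI}) of $\I_{\{\tau\le t\}}$ one writes $C_\tau\I_{\{\tau\le T\}} = \sum_{i\ge 1} C_{T^i}\I_{\{T^i=\tau\}}\I_{\{T^i\le T\}} + C_{T^0}\I_{\{T^0\le T\}}$. Taking $\FF$-optional projections and using that $\P(\tau=T^i\mid\F_{T^i}) = p^i_{T^i}$ (Lemma \ref{ja} together with Lemma \ref{pis}), the contribution of each $T^i$ becomes a term involving $p^i_{T^i}$; then one must substitute the assumed decomposition (\ref{reprpi}) of $p^i$ and the decomposition (\ref{R}) of the recovery $C$, and identify the cross term $\E[\kappa^i\phi^i\mid\F_{T^i-}] = \tilde\kappa^i_{T^i}$ using the orthogonality from Lemma \ref{lejan} (the martingales $\kappa^i\I_{\{T^i\le\cdot\}}$ and $\phi^i\I_{\{T^i\le\cdot\}}$ both jump only at $T^i$, with $\F_{T^i-}$-conditionally centered jumps, so only the product survives projection). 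This produces the $h^i = c^i + \tilde\kappa^i/(p^i_{-}+\upsilon^i)$ correction and, after replacing sums over jump times of $a^\tau$ by integrals against $d\Lambda^i_{\cdot\wedge T^i}$ via the compensator identities from Proposition \ref{Azema} (the processes $\Lambda^i$ and $a^0$ are continuous here because the $T^i$ are totally inaccessible), one gets exactly the integrand $C_{u-}d\Lambda_u - \sum_{i\ge 1}h^i_u\frac{p^i_{u-}+\upsilon^i_u}{Z^\tau_{u-}}d\Lambda^i_{u\wedge T^i}$ inside the $\P$-expectation; using $a^\tau$'s Doob–Meyer data to rewrite $\frac1{Z^\tau_{s-}}da^\tau_s$ as $d\Lambda_s$ on $[0,\tau)$ (Theorem \ref{calccomp}) is what brings in $e^{\Lambda}$ and makes the $D$-change-of-measure clean.

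The main obstacle I expect is the careful bookkeeping of the jump terms: making sure the $\F_{T^i-}$-versus-$\F_{T^i}$ distinction is tracked correctly, that the cross terms $\Delta p^i_{T^i}$ against $\Delta C_{T^i}$ pair up as claimed (and that the "diagonal" terms $p^i_{-}\kappa^i$ and $\upsilon^i c^i$ etc. land where they should), and that the orthogonality invocations of Lemma \ref{lejan} are legitimate — i.e. that every martingale I want to kill under the $\F_{T^i-}$-projection genuinely has its $T^i$-jump $\F_{T^i-}$-conditionally centered. The final specialization to (\ref{priceCpred}) is then immediate: if $C$ has no jumps at the $T^i$ then all $c^i\equiv 0$ and $\kappa^i\equiv 0$, so $h^i\equiv 0$, the correction sum vanishes, and $C_{u-}$ may be replaced by $C_u$ since $d\Lambda$ is carried by $[0,\tau)$ where $C$ is (by assumption) left-continuous-enough at those times, giving $\tilde S_t(X) = e^{\tilde R_t}\E^{\Q^\tau}[\int_t^T C_u e^{-\tilde R_u}d\Lambda_u + Pe^{-\tilde R_T}\mid\F_t]$.
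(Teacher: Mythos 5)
Your proposal follows essentially the same approach as the paper's proof: split off the promised-payment part (which the measure change to $\Q^\tau$ handles cleanly via $D=Z^\tau e^\Lambda$), then expand the recovery part through the decomposition (\ref{decI}), substitute the representations of $p^i$ and $C$, eliminate the odd cross-terms by Le Jan orthogonality (Lemma \ref{lejan}) so that only the "diagonal" $\kappa^i\phi^i$ term survives as $\tilde\kappa^i$, replace jump sums by integrals against $d\Lambda^i_{\cdot\wedge T^i}$, and finish with the $D$-change of measure. The paper does the last step a bit more explicitly, via an integration by parts of $D H^i$ and the observation that $\int H^i\,dD$ is a square-integrable martingale, but this is the same content as your "absorbed by $D$" remark, so the two arguments coincide.
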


\proof
The pricing formulas can be derived using arguments similar to those in \cite{coculescunikeghbali} (Proposition  4.3.), where $C$ was supposed predictable and $\Lambda$ continuous. We shall extend the formulas to hold for more general $C$ and $\Lambda$  (however in that paper, the \textbf{(H)} hypothesis was not necessarily holding).

First, let us notice from (\ref{predefprice})  that:
\[
\tilde S(P)_t=\frac{e^{R_t}}{Z^\tau_t}\E\left[Pe^{- R_T}Z^\tau_T|\F_t\right]=\frac{e^{\tilde R_t}}{D_t}\E\left[Pe^{-\tilde R_T}D_T|\F_t\right]=e^{\tilde R_t}\E^{\Q^\tau}\left[Pe^{-\tilde R_T}|\F_t\right].
\] 

By linearity of the conditional expectation, we only need to show the equality:
\[
\tilde S(C_\tau\I_{\{\tau \leq T\}})_t=e^{\tilde R_t}\E^{\Q^\tau}\left[\int_t^T e^{-\tilde R_u}C_{u-} d\Lambda_u -\sum_{i\geq 1}\int_t^T  e^{-\tilde R_u} h^i_u\frac{p^i_{u-}+\upsilon^i_u}{Z^\tau_{u-}}d\Lambda^i_{u\wedge T^i} |\F_t\right].
\]
Indeed, using equation (\ref{predefprice}), the pre-default price of a claim that pays $C_\tau$ at default and zero otherwise is given by:
\begin{align*}
\tilde S(C_\tau\I_{\{\tau \leq T\}})_t&=\frac{e^{R_t}}{Z^\tau_t}\E\left(e^{-R_\tau}C_\tau\I_{\{\tau\in(t, T]\}}|\F_t\right)=e^{\tilde R_t}D^{-1}_t\E\left(\int_t^Te^{-R_u}C_udA^\tau_u|\F_t\right)\\
&=e^{\tilde R_t}D^{-1}_t\sum_{i\geq 1}\E\left( e^{-R_{T^i}}C_{T^i}p^i_{T^i}\I_{\{T^i\in(t,T]\}}|\F_t\right)\\
&=e^{\tilde R_t}D^{-1}_t\sum_{i\geq 1}\E\left( e^{-R_{T^i}}\{C_{T^i-}-(c^i_{T^i}+\kappa^i)\}(p^i_{T^i-}+\upsilon^i_{T^i}+\phi^i)\I_{\{T^i\in(t,T]\}}|\F_t\right).
\end{align*}
Notice that (from Lemma \ref{lejan}) the following processes: $(e^{-R_{T^i}}(C_{T^i-}-c^i_{T^i})\phi^i\I_{\{T^i\leq t\}})$ and $(e^{-R_{T^i}}\kappa^i(p^i_{T^i-}+\upsilon^i_{T^i})\I_{\{T^i\in(t,T]\}})$  are martingales. Therefore we obtain:
\begin{align*}
\tilde S(C_\tau\I_{\{\tau \leq T\}})_t&=e^{\tilde R_t}D^{-1}_t\sum_{i\geq 1}\E\left(\int_t^T e^{-R_u}(C_{u-}-h^i_{u})(p^i_{u-}+\upsilon^i_{u})d\Lambda^i_{u\wedge T^i}|\F_t\right)\\
&=e^{\tilde R_t}D^{-1}_t\sum_{i\geq 1}\E\left(\int_t^T e^{-\tilde R_{u}}D_{u}(C_{u-}-h^i_{u})\frac{(p^i_{u-}+\upsilon^i_{u})}{Z^\tau_{u-}}d\Lambda^i_{u\wedge T^i}|\F_t\right)
\end{align*}

Denote $H^i_t:=\int_0^t e^{-\tilde R_{u}}(C_{u-}-h^i_{u})\frac{(p^i_{u-}+\upsilon^i_{u})}{Z^\tau_{u-}}d\Lambda^i_{u\wedge T^i}$. An integration by parts of the products $H^iD$ gives (recall that $\Lambda^i$ are supposed continuous):
\[
D_TH^i_T=D_tH^i_t+\int_t^T D_{u-}dH^i_u+\int_t^T H^i_{u} dD_u.
\]
Since $C$, $h^i$ and $\frac{(p^i_{t-}+\upsilon^i_t)}{Z^\tau_{t-}} $ are bounded and $D$ is a square integrable martingale, $H^i$ is also bounded and hence $\int H^idD$ is also a square integrable martingale. Consequently:
\[
\E\left[\int_t^T D_{u-}dH^i_u |\F_t\right]=\E\left[D_TH^i_T-D_t H^i_t-\int_t^T H^i_udD_u |\F_t\right]=\E\left[D_T(H^i_T-H^i_t)|\F_t\right].
\]
Therefore the pricing formula simplifies to :
\[
\tilde S(C_\tau\I_{\{\tau \leq T\}})_t=-e^{\tilde R_t}D^{-1}_t\E \left(D_T \sum_{i\geq 1}\int_t^T dH^i_u|\F_t\right),
\]
which, after a Girsanov transformation, leads to the desired formula.

\finproof

 As a corollary we obtain the following evolution of the pre-default price process (notice that the term $\tilde h^i$ is a risk premium attached to the sensitivity of the recovery to the economic shock $T^i$):
 
\begin{cor} Under the assumptions in Proposition \ref{price}, there exists a martingale $(M_t)$ under the measure $\P$ such that the pre-default price process satisfies the following stochastic differential equation:
\begin{equation}\label{sev}
\frac{d \tilde S_t(X)_t}{\tilde S_t(X)_t}=d\tilde R_t - \tilde C_t d\Lambda_t +\sum_{i\geq 1} \tilde h^i_t\frac{(p^i_{t-}+\upsilon^i_t)}{Z^\tau_{t-}}d\Lambda^i_{t\wedge T^i} +\sum_{i\geq 1}d\left (\frac{p^i_{T^i}\Delta M_{T^i}}{Z^\tau_{T^i-}}\I_{\{T^i\leq \cdot\}}\right)^p_t+dM_t
\end{equation}
where $\tilde C_t=C_t/\tilde S_t(X)_t$ and  $\tilde h^i_t=h^i_t/\tilde S_t(X)_t$.
\end{cor}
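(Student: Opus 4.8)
The plan is to differentiate the pricing formula \eqref{pricegen} for $\tilde S_t(X)$ and extract a finite-variation part and a martingale part. First I would rewrite \eqref{pricegen} as $\tilde S_t(X) = e^{\tilde R_t} D_t^{-1} Y_t$, where $Y_t := \E[\,D_T(\cdots) + D_T P e^{-\tilde R_T}\mid\F_t\,]$ after undoing the Girsanov change of measure (so all expectations are under $\P$), and where the bracket contains the running integrals $\int_0^T e^{-\tilde R_u}C_{u-}d\Lambda_u - \sum_{i\geq 1}\int_0^T e^{-\tilde R_u}h^i_u\frac{p^i_{u-}+\upsilon^i_u}{Z^\tau_{u-}}d\Lambda^i_{u\wedge T^i}$. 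The process $Y_t$ is then (the $\F_t$-conditional expectation of a fixed random variable, plus the already-realized part of the running integrals, i.e.) of the form $Y_t = \tilde M_t - \int_0^t(\text{integrand})_u du$-type; more precisely $Y_t + \int_0^t e^{-\tilde R_u}D_u C_{u-}d\Lambda_u - \sum_i\int_0^t e^{-\tilde R_u}D_u h^i_u\frac{p^i_{u-}+\upsilon^i_u}{Z^\tau_{u-}}d\Lambda^i_{u\wedge T^i}$ is a $\P$-martingale, call it $\bar M_t$.

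Next I would apply the integration-by-parts / Itô product rule to the product $\tilde S_t(X) = e^{\tilde R_t} D_t^{-1} Y_t$. The factor $e^{\tilde R_t} = e^{R_t + \Lambda_t}$ is continuous and of finite variation with $d(e^{\tilde R_t}) = e^{\tilde R_t} d\tilde R_t$ (since $R$ is absolutely continuous and $\Lambda$, under the hypothesis that the $T^i$ are totally inaccessible, is continuous). The factor $D_t^{-1} = Z^{\tau,-1}_t e^{-\Lambda_t}$: from $D_t = \mathcal E(-\int d\hat N_s/Z^\tau_{s-})_t$ one gets $dD_t = -D_{t-} \frac{d\hat N_t}{Z^\tau_{t-}}$, and $D^{-1}$ satisfies $dD^{-1}_t = D^{-1}_{t-}\big(\frac{d\hat N_t}{Z^\tau_{t-}} + \frac{(\Delta\hat N_t)^2/(Z^\tau_{t-})^2}{1 - \Delta\hat N_t/Z^\tau_{t-}}\big)$ — but it is cleaner to observe $D^{-1}_t = e^{\Lambda_t}/Z^\tau_t$ and use $Z^\tau = 1 - A^\tau$ with the Doob–Meyer decomposition $Z^\tau = (1-\hat N) - a^\tau$ from Proposition \ref{Azema}, together with $\hat N_t = \sum_{i\geq 1}\int_0^t(p^i_{s-}+\upsilon^i_s)dN^i_s + \sum_{i\geq 1}(\Delta p^i_{T^i} - \upsilon^i_{T^i})\I_{\{T^i\leq t\}}$. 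Collecting terms: the $d\tilde R_t$ term comes from differentiating $e^{\tilde R_t}$; the $-\tilde C_t d\Lambda_t$ term is the drift contribution $-e^{\tilde R_t}D^{-1}_t e^{-\tilde R_t}D_t C_{t-}d\Lambda_t / \tilde S_t(X) = -C_{t-}d\Lambda_t/\tilde S_t(X)$ coming out of $\bar M$; the $+\sum_i \tilde h^i_t\frac{p^i_{t-}+\upsilon^i_t}{Z^\tau_{t-}}d\Lambda^i_{t\wedge T^i}$ term is the matching drift from the second running integral in $\bar M$; the predictable-projection term $\sum_{i}d(\frac{p^i_{T^i}\Delta M_{T^i}}{Z^\tau_{T^i-}}\I_{\{T^i\leq\cdot\}})^p_t$ is exactly the compensator needed to make the jump contributions $\frac{\Delta(D^{-1})_t}{D^{-1}_{t-}}\cdot(\text{jump of the }\P\text{-martingale }\bar M)$ — equivalently the quadratic-covariation correction $d[D^{-1},\bar M]_t/(\text{stuff})$ arising from the product rule on $D^{-1}_t Y_t$ — into a martingale $dM_t$.

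The main obstacle will be the careful bookkeeping of the jump terms: at each $T^i$ both $D^{-1}$ (through $\hat N$) and the martingale part $\bar M$ jump, so the product rule produces a $\sum_i \Delta(D^{-1})_{T^i}\Delta\bar M_{T^i}$ covariation term, and one must verify that its predictable compensator coincides with $\sum_{i\geq 1}\big(\frac{p^i_{T^i}\Delta M_{T^i}}{Z^\tau_{T^i-}}\I_{\{T^i\leq\cdot\}}\big)^p_t$ after dividing by $\tilde S_t(X)$ and identifying $\Delta M_{T^i}$ with the jump of the driving $\P$-martingale $M$ in \eqref{premium}. The identities $\Delta Z^\tau_{T^i} = -\Delta A^\tau_{T^i} = -p^i_{T^i}$ (from \eqref{Z}, recalling $Z^\tau = 1 - A^\tau$) and $\Delta(D^{-1})_{T^i}/D^{-1}_{T^i-} = -\Delta Z^\tau_{T^i}/Z^\tau_{T^i} = p^i_{T^i}/Z^\tau_{T^i}$, combined with $Z^\tau_{T^i} = Z^\tau_{T^i-} - p^i_{T^i}$, are the computational engine here; the orthogonality statements in Lemma \ref{lejan} and the decomposition \eqref{reprpi} are what guarantee the cross terms not displayed in \eqref{sev} vanish in expectation and can be absorbed into $M$. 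Everything else (the continuity of $R$ and $\Lambda$, boundedness of $\tilde C$, $\tilde h^i$ and $\frac{p^i_{t-}+\upsilon^i_t}{Z^\tau_{t-}}$, square-integrability of $D$) has already been established in Proposition \ref{price} and its proof, so those facts may be invoked directly; defining $\tilde C_t = C_t/\tilde S_t(X)_t$ and $\tilde h^i_t = h^i_t/\tilde S_t(X)_t$ as in the statement then yields \eqref{sev}.
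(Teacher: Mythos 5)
Your plan is correct, and it arrives at the stated SDE, but by a somewhat more explicit route than the paper's. The paper differentiates (\ref{pricegen}) directly \emph{under} $\Q^\tau$ — producing the SDE with a residual $\Q^\tau$-martingale $\tilde M$ — and then invokes Girsanov's theorem as a black box to write $\tilde M = M + \bigl(\sum_{i\geq 1}\frac{\Delta M_{T^i}p^i_{T^i}}{Z^\tau_{T^i-}}\I_{\{T^i\leq t\}}\bigr)^p$ with $M$ a $\P$-martingale. You instead unroll the change of measure from the start, writing $\tilde S_t(X) = e^{\tilde R_t}D_t^{-1}Y_t$ with $Y$ a $\P$-conditional expectation plus running integrals, and apply the Itô product rule under $\P$; the predictable-projection term then appears concretely as the compensator of the jump covariation $[D^{-1},\bar M]$. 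These are the same mechanism (Girsanov's drift correction \emph{is} the compensator of that covariation), so you are not taking a genuinely different mathematical path, but your version makes the source of the $\bigl(\frac{p^i_{T^i}\Delta M_{T^i}}{Z^\tau_{T^i-}}\I_{\{T^i\leq\cdot\}}\bigr)^p$ term visible, which the paper leaves implicit. The price is the bookkeeping you flag: you must track $\frac{\Delta(D^{-1})_{T^i}}{D^{-1}_{T^i-}} = \frac{p^i_{T^i}}{Z^\tau_{T^i}}$ and then reconcile the $Z^\tau_{T^i}$ in that ratio with the $Z^\tau_{T^i-}$ appearing in the displayed compensator (using $Z^\tau_{T^i} = Z^\tau_{T^i-} - p^i_{T^i}$ and absorbing the difference into a martingale). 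Two small points: you write $D_t^{-1} = e^{\Lambda_t}/Z^\tau_t$ in one place — that should be $e^{-\Lambda_t}/Z^\tau_t$, consistent with your other line $D_t^{-1} = Z_t^{\tau,-1}e^{-\Lambda_t}$; and be aware that the paper's own intermediate display in this proof writes $\frac{p^i_{t-}}{Z^\tau_{t-}}$ rather than $\frac{p^i_{t-}+\upsilon^i_t}{Z^\tau_{t-}}$, which is a typo in the source (the drift in (\ref{pricegen}) and the corollary statement both carry the $+\upsilon^i_t$), so do not be misled if you cross-check against it.
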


\proof

From equation (\ref{pricegen}) it follows that there exists a $\Q^\tau$-martingale $\tilde M$ such that:
\[
\frac{d \tilde S_t(X)_t}{\tilde S_t(X)_t}=d\tilde R_t - \tilde C_t d\Lambda_t +\sum_{i\geq 1} \tilde h^i_t\frac{p^i_{t-}}{Z^\tau_{t-}}d\Lambda^i_{t\wedge T^i}+d\tilde M_t
\] 
Using the Girsanov's theorem $\tilde M=M+\left (\sum_{i\leq 1}\frac{\Delta M_{T^i}p^i_{T^i}}{Z^\tau_{T^i-}}\I_{\{T^i\leq t\}}\right )^p$ where $M$ is a $\Q$ martingale. 
\finproof

In practical applications, one can obtain explicit formulas for the default event risk premium by computing explicitly the dual predictable projections appearing in the corollary above. Let us give the general formulation of these expressions. Any square integrable  $\FF$ martingale $M$ can be decomposed in sum of orthogonal martingales as follows (see \cite{LeJan}):
\begin{equation}\label{reprM}
M_t=\sum_{i\leq 1} \int_0^t f^i_sdN^i_s + \sum_{i\leq 1}\theta^i\I_{\{T^i\leq t\}} + \hat M_t,
\end{equation}
where the processes $f^i$ are predictable with $\E[|f^i_{T^i}|]<\infty$; $\theta^i$ are random variables $\F_{T^i}$ measurable with $\E[\theta^i|\F_{T^i-}]=0$ (i.e.,  the process $(\theta^i\I_{\{T^i\leq t\}})$ is a martingale) and  $(\hat M_t)$ is a martingale orthogonal to any  $N^i$, $i\geq 1$. 


\begin{thm}\label{decprem}
Suppose  that the martingale $M$ appearing in (\ref{sev}) decomposes as  in (\ref{reprM}) and the martingales $p^i$, $i\geq 1$ decompose as in (\ref{reprpi}).

Then,  the default event risk premium has the expression:
\begin{equation}\label{prime1}
\pi(X)_t=\int_0^t(1-\tilde C_u)d\Lambda_u+\sum_{i\geq 1}\int_0^t(\tilde h^i_u+\varphi^i_u) \frac{(p^i_{u-}+\upsilon^i_u)}{Z^\tau_{u-}}d\Lambda^i_{u\wedge T^i},
\end{equation}
where, for $i\geq 1$: $\varphi^i_t:=f^i_t+\frac{\sigma^i_t}{p^i_{t-}+\upsilon^i_t}$ and $(\sigma^i_t)$ is the predictable process which satisfies $$\E[\theta^i\phi^i|\F_{T^i-}]=\sigma^i_{T^i}.$$
\end{thm}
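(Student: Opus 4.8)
The plan is to read off the predictable finite variation part $\nu(X)$ of $d\tilde S(X)/\tilde S(X)$ from the evolution equation (\ref{sev}) established in the preceding Corollary, and then to make explicit the only term there that is not already in final form, namely the dual predictable projection $\bigl(\tfrac{p^i_{T^i}\Delta M_{T^i}}{Z^\tau_{T^i-}}\I_{\{T^i\le\cdot\}}\bigr)^p$, using the orthogonal decompositions (\ref{reprpi}) of $p^i$ and (\ref{reprM}) of $M$. Since $\tilde R=R+\Lambda$, subtracting $R$ as prescribed by Definition \ref{def::premiums} turns (\ref{sev}) into
\[
d\pi(X)_t=(1-\tilde C_t)\,d\Lambda_t+\sum_{i\ge1}\tilde h^i_t\,\frac{p^i_{t-}+\upsilon^i_t}{Z^\tau_{t-}}\,d\Lambda^i_{t\wedge T^i}+\sum_{i\ge1}d\!\left(\frac{p^i_{T^i}\Delta M_{T^i}}{Z^\tau_{T^i-}}\I_{\{T^i\le\cdot\}}\right)^{\!p}_t ,
\]
(the martingale term $dM_t$ and $dR_t$ cancelling appropriately, and $\nu(X)$ being continuous so that the distinction between $\tilde S_u$ and $\tilde S_{u-}$ in Definition \ref{def::premiums} is immaterial). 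Thus (\ref{prime1}) follows once the last dual predictable projection is identified with $\int_0^\cdot\varphi^i_u\tfrac{p^i_{u-}+\upsilon^i_u}{Z^\tau_{u-}}\,d\Lambda^i_{u\wedge T^i}$ and the two sums over $i$ are merged.

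To identify that projection, I would evaluate the jumps at $T^i$. Because $T^i$ is totally inaccessible (as in Proposition \ref{price}), $\Lambda^i$ is continuous, so $\Delta N^i_{T^i}=1$, and moreover the components of (\ref{reprpi}) and (\ref{reprM}) orthogonal to $N^i$ carry no jump at $T^i$; hence $\Delta M_{T^i}=f^i_{T^i}+\theta^i$ and $p^i_{T^i}=p^i_{T^i-}+\upsilon^i_{T^i}+\phi^i$, where $p^i_{T^i-},\upsilon^i_{T^i},f^i_{T^i},Z^\tau_{T^i-}$ are $\F_{T^i-}$-measurable while $\E[\theta^i\mid\F_{T^i-}]=\E[\phi^i\mid\F_{T^i-}]=0$. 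Expanding,
\[
p^i_{T^i}\Delta M_{T^i}=(p^i_{T^i-}+\upsilon^i_{T^i})f^i_{T^i}+(p^i_{T^i-}+\upsilon^i_{T^i})\theta^i+f^i_{T^i}\phi^i+\phi^i\theta^i ,
\]
I divide by the $\F_{T^i-}$-measurable quantity $Z^\tau_{T^i-}$ and take $\F_{T^i-}$-conditional expectations term by term: the two cross terms vanish since each contains exactly one centered factor, the first term is already $\F_{T^i-}$-measurable, and the last one equals $\sigma^i_{T^i}$ by the very definition of $\sigma^i$ in the statement. Using the standard formula $\bigl(Y\I_{\{T^i\le\cdot\}}\bigr)^p=\int y_s\,d\Lambda^i_{s\wedge T^i}$ for an $\F_{T^i}$-measurable $Y$ whose predictable $\F_{T^i-}$-projection is $y$ (with $\Lambda^i_{\cdot\wedge T^i}$ the $\FF$-compensator of $T^i$, carried by the graph of $T^i$), this yields
\[
\left(\frac{p^i_{T^i}\Delta M_{T^i}}{Z^\tau_{T^i-}}\I_{\{T^i\le\cdot\}}\right)^{\!p}_t=\int_0^t\frac{(p^i_{u-}+\upsilon^i_u)f^i_u+\sigma^i_u}{Z^\tau_{u-}}\,d\Lambda^i_{u\wedge T^i}=\int_0^t\varphi^i_u\,\frac{p^i_{u-}+\upsilon^i_u}{Z^\tau_{u-}}\,d\Lambda^i_{u\wedge T^i},
\]
where in the last step I factor out $(p^i_{u-}+\upsilon^i_u)/Z^\tau_{u-}$ and recognize $\varphi^i_u=f^i_u+\sigma^i_u/(p^i_{u-}+\upsilon^i_u)$. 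Substituting into the displayed expression for $d\pi(X)_t$, combining the two $d\Lambda^i_{\cdot\wedge T^i}$ sums and integrating from $0$ gives exactly (\ref{prime1}).

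The main obstacle, and essentially the only place where care is required, is the justification of the jump computations and the integrability needed for the projections and conditional expectations: one must check that the parts of $M$ and of $p^i$ orthogonal to $N^i$ do not charge the totally inaccessible time $T^i$, that $\phi^i\theta^i$, $f^i_{T^i}\phi^i$ and $(p^i_{T^i-}+\upsilon^i_{T^i})f^i_{T^i}$ are integrable enough for $\sigma^i$ and the dual predictable projections to be well defined, and that the divisions by $Z^\tau_{T^i-}$ and by $p^i_{u-}+\upsilon^i_u$ are licit. These follow from the square integrability of the martingales entering (\ref{reprpi})--(\ref{reprM}), the integrability hypotheses $\E[|\upsilon^i_{T^i}|]<\infty$, $\E[|f^i_{T^i}|]<\infty$ stated there, the boundedness of $C$ and of the $h^i$, the square integrability of $D$, and the assumption $Z^\tau>0$ --- precisely the ingredients already used in the proof of Proposition \ref{price}. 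Everything else is routine bookkeeping of dual predictable projections of processes supported on the graphs of the $T^i$.
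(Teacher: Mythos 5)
Your proposal is correct and follows essentially the same route as the paper: both start from the evolution equation (\ref{sev}), reduce the problem to identifying $\bigl(\tfrac{p^i_{T^i}\Delta M_{T^i}}{Z^\tau_{T^i-}}\I_{\{T^i\le\cdot\}}\bigr)^p$, and do so by substituting the jump decompositions $p^i_{T^i}=p^i_{T^i-}+\upsilon^i_{T^i}+\phi^i$, $\Delta M_{T^i}=f^i_{T^i}+\theta^i$ and taking $\F_{T^i-}$-conditional expectations, with the cross terms dropping out and the $\phi^i\theta^i$ term producing $\sigma^i_{T^i}$. The only cosmetic difference is that you expand the product into its four terms at once and project each, whereas the paper groups the computation into a $(f^i\,p^i)^p$ contribution and a $\int p^i\,dn^i$ contribution (with $n^i=\theta^i\I_{\{T^i\le\cdot\}}$), using that $\int(p^i_{u-}+\upsilon^i_u)\,dn^i_u$ is a martingale before isolating $(\phi^i\theta^i\I_{\{T^i\le\cdot\}})^p$; the two bookkeepings are equivalent.
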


\proof

If $M$ has the representation (\ref{reprM}) then $\Delta M_{T^i} = f^i_{T^i}+\theta^i$.  Let us denote $n^i_t=\theta^i\I_{\{T^i\leq t\}} $ which we recall is a square integrable martingale orthogonal to square integrable martingales of the form $\int f_udN^i_u$. Therefore:
\begin{align*}
 \left(\frac{\Delta M_{T^i}p^i_{T^i}}{Z^\tau_{T^i-}}\I_{T^i\leq \cdot}\right)^p&=\left(\frac{ f^i_{T^i}p^i_{T^i}}{Z^\tau_{T^i-}}\I_{T^i\leq t}\right)^p+ \left(\frac{p^i_{T^i}}{Z^\tau_{T^i-}}\theta^i\I_{T^i\leq t}\right)^p\\
&=\int_0^t\frac{f^i_u (p^i_{u-}+\upsilon^i_u)}{Z^\tau_{u-}}d\Lambda^i_{u\wedge T^i}+ \left(\int_0^t \frac{p^i_u}{Z^\tau_{u-}}dn^i_u\right)^p
\end{align*}
We have that $\Delta p^i_{T^i}=\upsilon_{T^i}+\phi^i$ and hence: $\int_0^t p^i_udn^i_u=\int_0^t (p^i_{u-}+\upsilon^i_u)dn^i_u+\phi^i\theta^i_t\I_{\{T^i\leq t\}} $. The first term in the sum being a martingale, we obtain that  $ \left(\int_0^t p^i_udn^i_u\right)^p=  \left(\phi^i\theta^i_t\I_{\{T^i\leq t\}}\right)^p=\int_0^t \sigma^i_ud\Lambda_{u\wedge T^i}$ (as explained in \cite{delmey78}). The result follows.

\finproof

It is possible to decompose the default risk premium appearing in (\ref{prime1}) in an idiosyncratic and a systematic part (which in turn can be decomposed along premiums attached to each macroeconomic shock) as follows:
\[
\pi(X)_t=\int_0^t(1-\tilde C_u)\frac{da^0_u}{Z^\tau_{u-}}+\sum_{i\geq 1}\int_0^t\{1-\tilde C_u+(\tilde h^i_u+\varphi^i_u)\} \frac{p^i_{u-}+\upsilon_u}{Z^\tau_{u-}}d\Lambda^i_{u\wedge T^i}.
\]
We see that each possible macroeconomic shock $T^i$, $i\geq 1$ commands a corresponding risk premium for the possible jumps at $T^i$ of the recovery process but also of the martingale $M$, which reflects possible losses of a hedging portfolio.

\section{Conclusion}

In this paper we proposed a decomposition of a default times using sequences of stopping times of the reference filtration $\FF$. Our aim was to propose a systematic construction of default times but also to show that some of the simplifying assumptions appearing in the literature lead in fact to an underestimation of the risks attached to a long position in a defaultable claim, in particular the risks of losses at the default time. We hope that this analysis sheds light on the behavior of the assets at the default announcement (the so-called ''jump to default").

\begin{appendix}
\section {Some definitions}\label{sec::Appendix}

We consider a filtered probability space $(\Omega,\F,( \F_{t})_{t\geq0},\P)$ that satisfies the usual conditions.

\begin{thm}
[Optional and predictable projections]Let $X$ be a measurable process, positive or
bounded. There exists a unique (up to indistinguishability) optional
process $^{o}X$ (resp. predictable process $^{p}X$) such that:
\[
\mathbb{E}\left[  X_{T}\mathbf{1}_{\left(  T<\infty\right)  }|\mathcal{F}%
_{T}\right]  =\ ^{o}X_{T}\mathbf{1}_{\left(  T<\infty\right)  }\
a.s.
\]
for every stopping time $T$ (resp.
\[
\mathbb{E}\left[  X_{T}\mathbf{1}_{\left(  T<\infty\right)  }|\mathcal{F}%
_{T-}\right]  =\ ^{p}X_{T}\mathbf{1}_{\left(  T<\infty\right)  }\
a.s.
\]
for every predictable stopping time $T$). The process $^{o}X$ is called the
optional projection of $X$. The process $^{p}X$ is
called the predictable projection of $X$.
\end{thm}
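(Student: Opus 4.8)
The plan is to prove uniqueness by a section-theorem argument and existence by a functional monotone class reduction to optional (resp. predictable) sampling of a càdlàg version of a uniformly integrable martingale.

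For uniqueness, suppose $Y$ and $Y'$ are two optional processes both satisfying the defining identity. Then $Y-Y'$ is optional and $(Y-Y')_T\I_{\{T<\infty\}}=0$ a.s. for every stopping time $T$. If the optional set $\{Y\neq Y'\}$ were not evanescent, the optional section theorem would produce a stopping time $T$ with graph contained in $\{Y\neq Y'\}$ and $\P(T<\infty)>0$, contradicting the previous line; hence $Y=Y'$ up to indistinguishability. The predictable case is identical, replacing the optional section theorem by the predictable section theorem and stopping times by predictable stopping times. Note that uniqueness already forces monotonicity of the projection: if $0\le X\le X'$ are bounded and both admit projections, then applying the defining identity and monotonicity of conditional expectation at every (predictable) stopping time, together with a section argument, gives $^{o}X\le{}^{o}X'$ (resp. $^{p}X\le{}^{p}X'$).

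For existence it suffices to treat bounded $X$: once the bounded case is settled, for positive $X$ the projections $^{o}(X\wedge n)$ increase with $n$ by the monotonicity just noted, and $\sup_{n}{}^{o}(X\wedge n)$ — optional as a countable supremum of optional processes — is the optional projection of $X$ by monotone convergence inside each conditional expectation, and similarly in the predictable case. So let $\mathcal H$ be the family of bounded measurable processes admitting an optional projection; $\mathcal H$ is a vector space by linearity of conditional expectation, closed under uniform limits, and closed under bounded increasing limits (again by monotone convergence at each stopping time, the projections increasing by uniqueness). A double functional monotone class argument — first writing a generator of $\mathcal B(\RR_{+})\otimes\F$ in the form $h(t)\xi(\omega)$ with $h$ bounded Borel and $\xi$ bounded $\F$-measurable, then reducing $h$ to the indicators $\I_{]u,\infty[}$, $u\ge 0$ — shows it is enough to treat $X_{t}(\omega)=\I_{]u,\infty[}(t)\xi(\omega)$. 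For such $X$, take the càdlàg version $(M_{t})$ of the uniformly integrable martingale $t\mapsto\E[\xi\mid\F_{t}]$ (which exists under the usual conditions) and set $^{o}X_{t}:=\I_{]u,\infty[}(t)M_{t}$; this is optional, being the product of the left-continuous deterministic process $\I_{]u,\infty[}$ and the càdlàg adapted process $M$. For any stopping time $T$, using $\{T>u\},\{T<\infty\}\in\F_{T}$ and the optional sampling theorem $M_{T}=\E[M_{\infty}\mid\F_{T}]=\E[\xi\mid\F_{T}]$ on $\{T<\infty\}$, one obtains $\E[X_{T}\I_{\{T<\infty\}}\mid\F_{T}]=\I_{\{T>u\}}\I_{\{T<\infty\}}M_{T}={}^{o}X_{T}\I_{\{T<\infty\}}$, as required.

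The predictable projection is obtained by the same scheme, now setting $^{p}X_{t}:=\I_{]u,\infty[}(t)M_{t-}$, which is predictable as a product of two left-continuous adapted processes. For a predictable stopping time $T$ with an announcing sequence $T_{n}\uparrow\uparrow T$, one has $\{T>u\}\in\F_{T-}$, $\{T<\infty\}\in\F_{T-}$, and $\F_{T-}=\bigvee_{n}\F_{T_{n}}$, so by martingale convergence $\E[\xi\mid\F_{T-}]=\lim_{n}\E[\xi\mid\F_{T_{n}}]=\lim_{n}M_{T_{n}}=M_{T-}$ on $\{T<\infty\}$; hence $\E[X_{T}\I_{\{T<\infty\}}\mid\F_{T-}]={}^{p}X_{T}\I_{\{T<\infty\}}$. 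The main difficulty is not any single computation but the reliance on the section theorems for uniqueness and on the (predictable) optional sampling theorem together with the existence of càdlàg martingale versions under the usual conditions; granting those inputs, all remaining steps are the routine functional monotone class and monotone convergence verifications sketched above.
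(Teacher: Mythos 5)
The paper states this theorem in its appendix as a background reference from the general theory of processes (Dellacherie--Meyer) and gives no proof of its own, so there is nothing in the paper to compare against line by line. Your proposal is the standard textbook argument and it is correct: uniqueness via the optional (resp. predictable) section theorem, reduction by the functional monotone class theorem to processes of the form $\I_{]u,\infty[}(t)\xi(\omega)$, and existence for these via a c\`adl\`ag version of the uniformly integrable martingale $\E[\xi\mid\F_t]$ together with optional sampling (taking $M_T$ in the optional case and $M_{T-}$ with an announcing sequence and L\'evy's upward theorem in the predictable case). You correctly flag the essential external inputs (section theorems, existence of c\`adl\`ag martingale versions under the usual conditions, optional sampling), and the passage from bounded to nonnegative $X$ by monotone limits is handled properly, including the observation that the monotonicity of the projection itself follows from uniqueness and a section argument. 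The only place where the sketch is slightly loose is the monotone class step: to apply the functional monotone class theorem cleanly one should take as multiplicative generating class the indicators $\I_{]u,\infty[}(t)\I_A(\omega)$, $u\ge 0$, $A\in\F$ (closed under products and generating $\mathcal B(\RR_+)\otimes\F$), rather than first invoking general $h(t)\xi(\omega)$; but this is a cosmetic point and does not affect correctness.
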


A stochastic process which is nonnegative and whose path are
increasing and c\`adl\`ag, but which is not $\FF$-adapted is called a \textit{raw increasing
process}.

\begin{defn}
[Dual optional and predictable
projections]\label{DefDualPredProj1}Let $\left(  A_{t}\right)
_{t\geq0}$ be an integrable raw increasing process. We call dual
optional projection of $A$ the $\left(  \mathcal{F}_{t}\right)
$-optional increasing process $\left(  A_{t}^{o}\right)  _{t\geq0}$
defined
by:%
\[
\mathbb{E}\left[  \int_{0}^{\infty}\ ^{o}X_{s}dA_{s}\right]  =\mathbb{E}%
\left[  \int_{0}^{\infty}X_{s}dA_{s}^{o}\right]
\]
for any bounded adapted $\left(  X_{t}\right)  $. We call dual
predictable projection of $A$ the $\left(  \mathcal{F}_{t}\right)
$-predictable increasing process $\left(  A_{t}^{p}\right)
_{t\geq0}$, such that
\begin{equation}
\mathbb{E}\left[  \int_{0}^{\infty}\ ^{p}X_{s}dA_{s}\right]  =\mathbb{E}%
\left[  \int_{0}^{\infty}X_{s}dA_{s}^{p}\right]  .
\end{equation}
for every adapted bounded $\left(  X_{t}\right)  $.
\end{defn}

\end{appendix}

\renewcommand{\refname}{References}

\end{document}